\DeclareMathOperator*{\E}{{\mathbf E}}        
\let\Pr\undefined 
\DeclareMathOperator{\Pr}{{\mathbf P}}        
\DeclareMathOperator{\tr}{tr}
\newtheorem{theorem}{Theorem}[section]
\newtheorem{lemma}[theorem]{Lemma}
\newtheorem{proposition}[theorem]{Proposition}
\newtheorem{definition}{Definition}[section]
\newtheorem{assumption}{Assumption}[section]
\newtheorem{example}{Example}[section]
\newtheorem{remark}{Remark}[section]
\newenvironment{proof}[1][Proof]{\begin{trivlist}
\item[\hskip \labelsep {\bfseries #1}]}{\end{trivlist}}
\renewcommand{\qed}{$\blacksquare$}
\begin{document}

\begin{frontmatter}
\title{Stability Analysis and Design of \\
a Network of Event-based Systems}
\thanks[footnoteinfo]{This work was supported by the Swedish Research Council, VINNOVA (The Swedish Governmental Agency for Innovation Systems), the Swedish Foundation for Strategic Research, the Knut and Alice Wallenberg Foundation and the EU project Hycon$2$.}
\author{Chithrupa Ramesh}\ead{cramesh@kth.se},    
\author{Henrik Sandberg}\ead{hsan@kth.se},  
\author{Karl H. Johansson}\ead{kallej@kth.se}
\address{ACCESS Linnaeus Centre, KTH Royal Institute of Technology, Electrical Engineering, Stockholm, Sweden}

\begin{keyword}                           
Stabilizing networks, multiloop control, stability analysis.              
\end{keyword}                             

\begin{abstract}
We consider a network of event-based systems that use a shared wireless medium to communicate with their respective controllers. These systems use a contention resolution mechanism to arbitrate access to the shared network. We identify sufficient conditions for Lyapunov mean square stability of each control system in the network, and design event-based policies that guarantee it. Our stability analysis is based on a Markov model that removes the network-induced correlation between the states of the control systems in the network. Analyzing the stability of this Markov model remains a challenge, as the event-triggering policy renders the estimation error non-Gaussian. Hence, we identify an auxiliary system that furnishes an upper bound for the variance of the system states. Using the stability analysis, we design policies, such as the constant-probability policy, for adapting the event-triggering thresholds to the delay in accessing the network. Realistic wireless networked control examples illustrate the applicability of the presented approach.
\end{abstract}
\end{frontmatter}
%
%
%
%


\begin{abstract}
We consider a network of event-based systems that use a shared wireless medium to communicate with their respective controllers. These systems use a contention resolution mechanism to arbitrate access to the shared network. We identify sufficient conditions for Lyapunov mean square stability of each control system in the network, and design event-based policies that guarantee it. Our stability analysis is based on a Markov model that removes the network-induced correlation between the states of the control systems in the network. Analyzing the stability of this Markov model remains a challenge, as the event-triggering policy renders the estimation error non-Gaussian. Hence, we identify an auxiliary system that furnishes an upper bound for the variance of the system states. Using the stability analysis, we design policies, such as the constant-probability policy, for adapting the event-triggering thresholds to the delay in accessing the network. Realistic wireless networked control examples illustrate the applicability of the presented approach.
\end{abstract}

\section{Introduction} \label{S:Intro}

\subsection{Motivation}
A wireless networked control system comprises of many plants that communicate with their respective controllers over a shared wireless network. The control systems may use the shared network for sensing, as depicted in Fig.~\ref{Fig:NCSoverview}, or for actuation, or both. Using a wireless network brings many benefits such as mobility, ease of adding sensors and reduced wiring costs. However, using a common medium to communicate data from multiple control systems can result in congestion, which degrades control performance. This can be mitigated by reducing the number of transmissions from each control system. Event-based systems provide a means of accomplishing this, by transmitting only select events in place of periodic samples from the plant~\cite{Otanez2002,Astrom2002}.

\begin{figure*}[tb]
\begin{center}
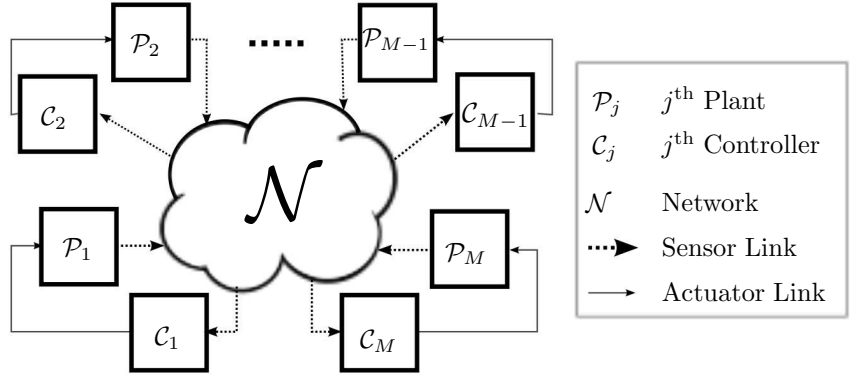
\caption{$M$ control systems comprising of a plant ($\mathcal{P}_j$) and a controller ($\mathcal{C}_j$) each, use a shared network ($\mathcal{N}$) for communication between their respective sensors and controllers. The controllers and actuators communicate over dedicated point to point links. We seek to design an event-triggering policy that results in stability for each control system, and for the network itself. } \label{Fig:NCSoverview}
\end{center}
\end{figure*}

From the network perspective, periodic packet arrivals from multiple sources can be easily scheduled. However, plant-generated event arrivals cannot be anticipated and consequently, cannot be scheduled a priori. Also, more than one event may arise simultaneously, resulting in a conflict for access to the shared medium. Thus, a Contention Resolution Mechanism (CRM), commonly used in wireless networking, is required to resolve such conflicts~\cite{Rom1990}. The CRM is a distributed and non-coordinated protocol, and hence, sometimes results in collisions, wherein all the data packets involved are lost. Now, an essential question for wireless networked control systems is how should event-based systems be designed to compensate for such losses, and yet provide a control system guarantee?

The event-triggering policies considered here are designed to detect a level crossing in the plant output~\cite{Rabi2006}. If the physical medium causes packet losses, then altering the triggering level to permit more frequent transmissions improves the packet reception rate~\cite{Rabi2009}. However, this strategy may not work when packets are lost due to collisions. This is because increasing the number of events may increase the number of collisions. Thus, the triggering levels may have to be altered to reduce the number of transmissions, so as to alleviate congestion in the network. This is the same principle used in congestion control in TCP/IP or in the backoff mechanism in Carrier-Sense Multiple Access (CSMA) protocols. However, will such a policy lead to stability of the networked control system? In other words, how should the levels be selected, to ensure stability of the network \emph{and} stability of the control system? Answering this question is the main objective of this paper.

\subsection{Main Contributions}
There are two main contributions in this paper. Our first contribution is to identify stability conditions for a network of event-based systems. To analyze stability of this network, we use the network-interaction model proposed in~\cite{Ramesh2011b}. Here, Bianchi's assumption~\cite{Bianchi2000} is used to decouple interaction between the various loops, resulting in a steady state Markov model. A statistical description of the system evolution through the states of the Markov chain is not analytically tractable, and hence, we identify an upper bound to describe the system using majorization theory. We obtain sufficient conditions for Lyapunov mean square stability by analyzing the resulting upper bound, and find that this notion of stability is achievable, if the probability of increasing delay is suitably restricted.

Our second contribution is to use the above stability analysis to design event-triggering policies that guarantee stability. We introduce a constant-law policy, where the event probabilities are mandated to remain constant, with increasing delay. We derive conditions for Lyapunov mean square stability for this policy, and present a design algorithm that guarantees it for a network of control systems using this policy. Hence, the paper is constructive in delivering an explicit policy guaranteeing network and closed-loop stability under suitable assumptions.

\subsection{Related Work}

Event-based systems were proposed as a means to reduce congestion in Networked Control Systems (NCS)~\cite{Astrom1999,Yook2002,Otanez2002}. Early work showed that the same control performance can be achieved using fewer samples with event-based systems, for a single system~\cite{Tomovic1966,Astrom1999}. Various event-triggering policies have been proposed for different problem formulations, both stochastic~\cite{Rabi2006,Henningsson2008} and deterministic~\cite{Tabuada2007,Heemels2008}. The event-triggering policies considered in this paper use the estimation error to decide when to transmit. Different variations of policies that use the innovations or estimation error can be found for networked estimation~\cite{Xia2013,Han2013} and networked control~\cite{Molin2009,Demirel2013}. Measurement policies based on the innovations have been suggested much earlier, notably in the encoder design problem for data-rate limited channels~\cite{Borkar1997}.

The multiple access problem for event-based systems has not received as much attention. Much of the work focussing on the design of event-based systems for a shared network~\cite{Wang2011,Molin2012} does not explicitly deal with the problem of multiple access. Others use protocols such as the CAN bus for wired networks~\cite{Anta2009a}, or dynamic real-time scheduling for multiple tasks on a single processor~\cite{Tabuada2007}. These protocols are not well-suited to wireless networks~\cite{Akyildiz1999,Gummalla2000}. There have been some attempts to analyze a network of event-based systems with random access~\cite{Cervin2008,Rabi2009,Henningsson2010}, albeit with simplifying assumptions such as independent packet losses, or by ignoring collisions. More recently, event-based systems which use Aloha and Slotted Aloha have been analyzed~\cite{Blind2011a}, but with an event-triggering policy that is not adapted to the network. In this paper, we use the Markov chain from~\cite{Ramesh2011b} to model the interactions between the event-triggering policy and the CRM. A similar Markov chain has been used, but to model only the event-triggering policy, in~\cite{Demirel2013,Xia2013}.

The problem of level selection after a packet loss was introduced in~\cite{Rabi2009}, where the authors evaluated the control cost of level triggering subject to i.i.d packet losses. Stochastic stability of event-based systems with i.i.d intervals between arrivals have been studied in~\cite{Antunes2011,Lemmon2011}. However, event arrivals in a contention-based network are not i.i.d, and the event arrivals considered in this paper exhibit a dependence on the delay since the last transmission. The notion of stability that we use in this analysis has been used in~\cite{Gupta2010}, to analyze i.i.d erasures, with a provision to extend to Markov models, in NCSs.

\begin{figure*}[tb]
\begin{center}
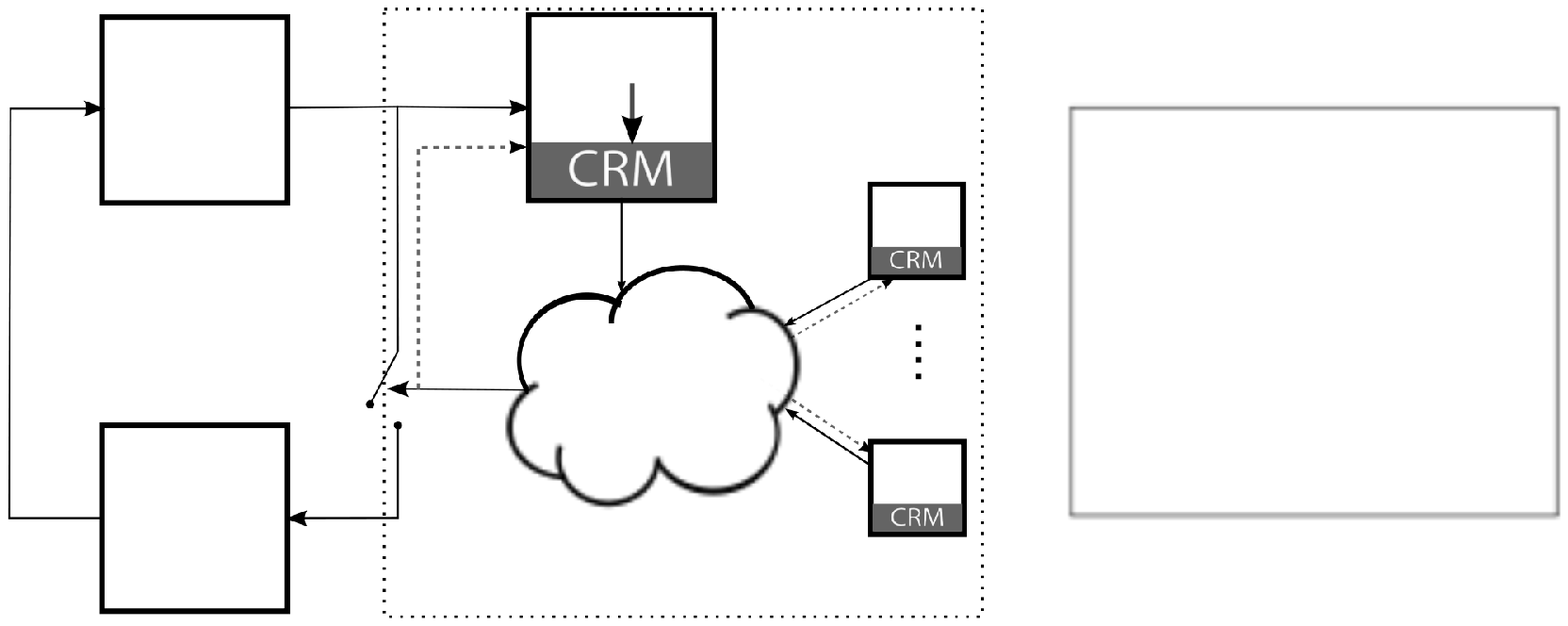
\caption{An overview of a multiple access network ($\smash{\mathcal{N}}$) of plants ($\smash{\mathcal{P}_{j}}$), state-based schedulers ($\smash{\mathcal{S}_{j}}$) and controllers ($\smash{\mathcal{C}_{j}}$), for $\smash{j \in \{1,\dots,M\}}$, using a CRM to access the shared network.} \label{Fig:SystemModel_CRM}
\end{center}
\end{figure*}

\subsection{Outline}
The rest of this paper is organized as follows. The problem formulation, along with a Markov chain representation, is presented in Section~\ref{S:ProbForm}. The main results on sufficient conditions for Lyapunov mean square stability are presented in Section~\ref{S:Results}, and three design laws are presented in Section~\ref{S:SchedDesign}. Some examples and conclusions follow in Sections~\ref{S:Sim} and \ref{S:Conclusions}, respectively.

\section{Problem Formulation} \label{S:ProbForm}

We consider a network of $M$ event-based systems, shown in Fig.~\ref{Fig:SystemModel_CRM}. We first describe a model for each event-based system in the network, indexed by $\smash{j \in \{1,\dots,M\}}$, and then present a model for the interaction of the $M$ systems.

\subsection{Closed-loop System Model}
The network on the sensor link can be modelled from the perspective of a single control system, as illustrated in Fig.~\ref{Fig:DualPred_CRM}. We describe each block in this model below. When the context is clear, we skip the system index $j$.

\noindent \textbf{Plant}: The plant $\smash{\mathcal{P}_{j}}$ has state dynamics given by
\begin{equation}
\label{Eq:StateSpace} x^{j}_{k+1} = A_j x^{j}_k + B_j u^{j}_k + w^{j}_k \; ,
\end{equation}
where $\smash{x^{j}_k \in \mathbb{R}^{n}}$, $\smash{u^{j}_k \in \mathbb{R}^{m}}$ and the initial state $\smash{x^{j}_0}$ and the process noise $\smash{w^{j}_k}$ are i.i.d. zero-mean Gaussians with covariance matrices $\smash{R^{j}_0}$ and $\smash{R^{j}_w}$, respectively. They are independent and uncorrelated to each other and to the initial states and process noises of other plants in the network. This discrete time model is defined with respect to a sampling period $T$ for each plant, and the sampling instants are generated by a synchronized network clock.

\noindent \textbf{Scheduler}: A local scheduler $\smash{\mathcal{S}_j}$, situated in the sensor node, executes the event-triggering policy. The event indicator is denoted $\smash{\gamma^{j}_k \in \{0,1\}}$, with $\smash{\gamma^{j}_k=1}$ in the case of an event. The event-triggering policy uses the innovations process to determine $\smash{\gamma^{j}_k}$, as given by
\begin{equation}
\gamma^{j}_k = \begin{cases}
1, & ||x^{j}_k - \hat{x}^{s,j}_{^{k|\tau_{k-1}}}|| > \Delta^{j}_d, \\
0, & \textrm{otherwise}.
\end{cases}
\label{Eq:InnoSched}
\end{equation}
Here, $\smash{\hat{x}^{s,j}_{^{k|\tau_{k-1}}} = A_j \hat{x}^{c,j}_{^{k-1|k-1}} + B_j u^{j}_{k-1}}$ and $\smash{\hat{x}^{c,j}_{^{k-1|k-1}}}$ denotes the estimate at the controller, defined in (\ref{Eq:EstimateO}) below. Furthermore, $\smash{\tau^{j}_k}$ is the time index of the last received packet, given by $\smash{\tau^{j}_k = \{ \max \{n,-1\} : \delta^{j}_n = 1, n \le k \}}$. Also, $\smash{\Delta^{j}_d > 0}$ is the event threshold, and it may vary with the delay $\smash{d = d^{j}_k \triangleq k-\tau^{j}_k}$. The parameters $\smash{\tau^{j}_k}$ and $\smash{d^{j}_k}$ are illustrated in Fig.~\ref{Fig:TimeLines}. To realize the above event-triggering policy, the observer and controller must be replicated within the scheduler, and an explicit acknowledgement (ACK) of a successful transmission is required.

\noindent \textbf{Network}: The network $\mathcal{N}$ generates exogenous traffic, as is indicated by $\smash{n^{j}_k \in \{0,1\}}$. It takes a value $1$ when a network source generates an event, and $0$ otherwise. The network traffic is stochastic, and hence, $\smash{n^{j}_k \in \{0,1\}}$ is not required to be i.i.d.

\begin{figure*}[!t]
\begin{center}
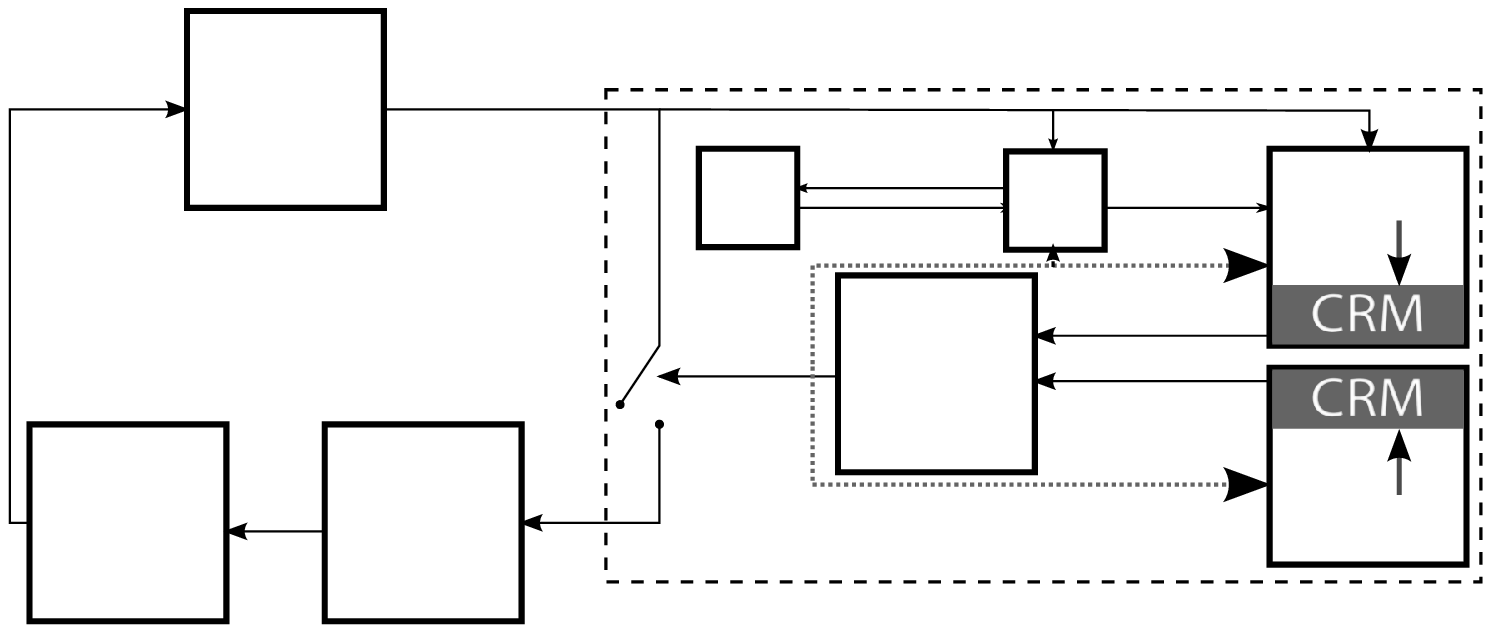
\caption{A model of the control system from the perspective of a single NCS in the network. The other control loops in the network are abstracted by the network traffic block ($\mathcal{N}$). The resolution block ($\mathcal{R}$) maps the CRM output $\alpha$ to the channel access indicator $\delta$. A copy of the observer ($\mathcal{O}$) and controller ($\mathcal{C}$) are required at the scheduler. } \label{Fig:DualPred_CRM}
\end{center}
\end{figure*}

\noindent \textbf{CRM}: The CRM resolves contention between simultaneous channel access requests. For simplicity, we assume that the network uses $p$-persistent CSMA with either no retransmissions or multiple retransmissions. We describe the CRM without retransmissions here, and explain how our model extends to the multiple retransmissions case in Section~\ref{SS:JointAnal}. The CRM output is denoted $\smash{\alpha^{j}_k \in \{0,1\}}$, and we have
\begin{equation} \label{Eq:alphaCRM}
\Pr(\alpha^{j}_k = 1|\gamma^{j}_k=1) = p_{_{\alpha}}
\end{equation}
where $p_{_{\alpha}}$ denotes the persistence probability of the CRM. Thus, with probability $q_{_{\alpha}} = 1-p_{_{\alpha}}$, some events are suppressed by the CRM and not permitted to access the medium. Similarly, $\alpha^{N,j}_k$ is the CRM output for the rest of the network, and $\Pr(\alpha^{N,j}_k = 1|n^{j}_k=1) = p_{_{\alpha}}$.

The resolution block ($\mathcal{R}$) maps the CRM outputs $\alpha^{j}_k$ and $\alpha^{N,j}_k$ to the channel access indicator $\delta^{j}_k$, as given by
\begin{equation} \label{Eq:deltaCRM}
\delta^{j}_k = \alpha^{j}_k (1-\alpha^{N,j}_k)
\end{equation}
where $(\delta^{j}_k=1)$ indicates that a successful transmission of the event has occurred. This is possible only when the CRM permits a transmission \emph{and} none of the other nodes attempt to transmit.

\noindent \textbf{Observer} $(\mathcal{O}_{j})$: The input to the observer is the received measurement signal $y^{j}_k = \delta^{j}_k x^{j}_k$. The observer generates the estimate $\hat{x}^{c,j}_{^{k|k}}$ as given by
\begin{equation} \label{Eq:EstimateO}
\hat{x}^{c,j}_{^{k|k}} = (1-\delta^{j}_k) (A_j \hat{x}^{c,j}_{^{k-1|k-1}} + B_j u^{j}_{k-1}) + \delta^{j}_k x^{j}_k \; ,
\end{equation}
where the estimate for $\delta^{j}_k = 0$ is the model-based prediction from the last received data packet at time $\tau^{j}_k$. The estimation error is defined as $\tilde{x}^{c,j}_{^{k|k}} \triangleq x^{j}_k - \hat{x}^{c,j}_{^{k|k}}$, and $P^{j}_{^{k|k}} = \E[\tilde{x}^{c,j}_{^{k|k}} (\tilde{x}^{c,j}_{^{k|k}})^{T}]$ is the covariance of the estimation error. We denote the variance as $\tr\{P^{j}_{^{k|k}}\}$, where $\tr$ is the trace operator.

\noindent \textbf{Controller} $(\mathcal{C}_j)$: The controller generates the signal $u^{j}_k$ as given by
\begin{equation}
\label{Eq:Controller}
u^{j}_k = - L_j \hat{x}^{c,j}_{^{k|k}} \; ,
\end{equation}
where $L_j$ is the controller gain chosen to minimize a control cost, such as an infinite horizon Linear Quadratic Gaussian (LQG) cost function. 

We are interested in investigating mean square boundedness of the plant state in steady state, or equivalently Lyapunov mean square stability. It is defined below for a control system in the above network. We skip the index $j$ as the definition is applicable for each control system.
\begin{definition} \textbf{\upshape{(Lyapunov Mean Square Stability~\cite{Kozin1969})}} \label{Def:MSS}
A state is said to possess Lyapunov mean square stability if given $\zeta > 0$, there exists $\xi(\zeta) > 0$ such that $|x_0| < \xi$ implies
\begin{equation} \label{Eq:LMSS}
\limsup_{k \rightarrow \infty} \E[x_k^T x_k] \le \zeta \; .
\end{equation}
\end{definition}

The Certainty Equivalence Principle has been shown to hold in the architecture described in (\ref{Eq:StateSpace})--(\ref{Eq:Controller}) in~\cite{Ramesh2013}. Thus, we can translate the stability property in Definition~\ref{Def:MSS} from the state to the estimation error, as shown below.
\begin{lemma} \label{Lemma:Lyapunov mean square stability_CE}
For a control system in the network given by (\ref{Eq:StateSpace})--(\ref{Eq:Controller}), there exists a constant $\varsigma$, with $0 < \varsigma < \zeta$, such that (\ref{Eq:LMSS}) is equivalent to
\begin{equation} \label{Eq:LMSS_CE}
\limsup_{k \rightarrow \infty} \tr\{\E[P_{^{k|k}}]\} \le \varsigma \; .
\end{equation}
\end{lemma}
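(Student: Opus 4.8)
The plan is to exploit the closed-loop structure in \eqref{Eq:StateSpace}--\eqref{Eq:Controller} together with the Certainty Equivalence result of~\cite{Ramesh2013} to write the state as the sum of a deterministic nominal trajectory and a term driven purely by the estimation error, and then to relate the second moments of these two pieces. First I would introduce the closed-loop dynamics obtained by substituting \eqref{Eq:Controller} into \eqref{Eq:StateSpace}: with $u_k = -L\hat{x}^{c}_{^{k|k}}$ and $\hat{x}^{c}_{^{k|k}} = x_k - \tilde{x}^{c}_{^{k|k}}$ one gets
\begin{equation}
x_{k+1} = (A-BL)x_k + BL\,\tilde{x}^{c}_{^{k|k}} + w_k .
\end{equation}
Since certainty equivalence holds, $L$ is the optimal LQG gain and $A-BL$ is Schur (stabilizable/detectable being implicit in the LQG setup). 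Iterating this recursion expresses $x_k$ as a sum of the free response $(A-BL)^k x_0$, a convolution of the Schur semigroup against $BL\,\tilde{x}^{c}$, and a convolution against the process noise $w_k$.

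Next I would take second moments. The free-response contribution $\E[\,((A-BL)^k x_0)^T ((A-BL)^k x_0)\,]$ is bounded by $c\,\rho^{2k}|x_0|^2$ with $\rho<1$ the spectral radius of $A-BL$ and $c$ a constant depending only on the system matrices, so it decays to zero and can be made $<\varsigma_1$ whenever $|x_0|<\xi$ for a suitable $\xi$; this is exactly where the $\limsup$ and the ``$|x_0|<\xi$'' quantifier of Definition~\ref{Def:MSS} get used. The $w_k$ contribution is a fixed constant $\sigma_w$ (the steady-state variance of a Schur system driven by white noise of covariance $R_w$), independent of the scheduling. The cross terms between $x_0$, $w$ and $\tilde{x}^{c}$ I would bound by Cauchy--Schwarz in terms of the individual second moments. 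The remaining term, the $BL\,\tilde{x}^{c}$ convolution, is controlled by $\limsup_k \tr\{\E[P_{^{k|k}}]\}$ via a summable geometric weighting $\sum_{i\ge 0}\|(A-BL)^i BL\|^2$; call the resulting multiplicative constant $\kappa$. Putting the pieces together yields, for $|x_0|<\xi$,
\begin{equation}
\limsup_{k\to\infty}\E[x_k^T x_k] \;\le\; \kappa\,\limsup_{k\to\infty}\tr\{\E[P_{^{k|k}}]\} \;+\; \sigma_w \;+\; \varsigma_1 .
\end{equation}
Conversely, from the same decomposition and the fact that $x_k - \hat{x}^{c}_{^{k|k}} = \tilde{x}^{c}_{^{k|k}}$, a reverse inequality $\limsup_k \tr\{\E[P_{^{k|k}}]\} \le c'\,\limsup_k \E[x_k^T x_k] + \text{const}$ holds, because the estimation error is a linear functional of past states and noises with geometrically weighted coefficients as well. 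These two inequalities together show that \eqref{Eq:LMSS} holds for some threshold $\zeta$ iff \eqref{Eq:LMSS_CE} holds for a corresponding threshold $\varsigma$, and one then reads off a constant $\varsigma$ with $0<\varsigma<\zeta$ by choosing $\varsigma$ slightly below the quantity that the state bound forces.

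The main obstacle I anticipate is handling the estimation error carefully: $\tilde{x}^{c}_{^{k|k}}$ is \emph{not} Gaussian (the event-triggering rule \eqref{Eq:InnoSched} truncates it) and its dynamics involve the random channel-access sequence $\delta_k$ through \eqref{Eq:EstimateO}, so one cannot simply invoke a standard Kalman-filter orthogonality/separation argument. The cleanest route is to treat the pair $(x_k,\tilde{x}^{c}_{^{k|k}})$ jointly, conditioning on the entire realization of $\{\delta_k\}$: conditioned on that sequence the maps are linear with uniformly Schur ``closed-loop'' block, so the geometric bounds above hold realization-by-realization with constants \emph{independent} of the $\delta$-sequence, and one then takes the outer expectation over $\{\delta_k\}$. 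This is where the Certainty Equivalence Principle of~\cite{Ramesh2013} does the real work, since it guarantees the estimator/controller pair is the one for which the state error feeds back cleanly and the ``closed-loop'' matrix $A-BL$ governs the nominal decay uniformly in the scheduling. Once that conditioning step is set up, the remaining estimates are the routine geometric-series manipulations sketched above.
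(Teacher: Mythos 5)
Your argument follows essentially the same route as the paper: both reduce the state's second moment to a recursion in the Schur matrix $A-BL$ driven by a term whose second moment is controlled by $\tr\{\E[P_{^{k|k}}]\}$, and both lean on the certainty-equivalence result of~\cite{Ramesh2013}. The paper does the bookkeeping on the estimate recursion $\hat{x}^{c}_{^{k|k}} = (A-BL)\hat{x}^{c}_{^{k-1|k-1}} + \delta_k(A\tilde{x}^{c}_{^{k-1|k-1}}+w_{k-1})$ and uses the orthogonality of the MMSE estimate and its error to split $\E[x_k^Tx_k]$ exactly into the estimate's second moment plus $\tr\{\E[P_{^{k|k}}]\}$; you instead iterate the state recursion $x_{k+1}=(A-BL)x_k+BL\tilde{x}^{c}_{^{k|k}}+w_k$ and control cross terms by Cauchy--Schwarz. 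These are interchangeable, and your conditioning on the realization of $\{\delta_k\}$ is unnecessary for the forward direction since $A-BL$ does not depend on $\delta_k$ and the driving term's second moment is exactly the quantity being assumed bounded.

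One step is wrongly justified, although the conclusion survives. For the converse you assert $\limsup_k\tr\{\E[P_{^{k|k}}]\}\le c'\limsup_k\E[x_k^Tx_k]+\mathrm{const}$ \emph{because} ``the estimation error is a linear functional of past states and noises with geometrically weighted coefficients.'' It is not: between successful receptions the error obeys $\tilde{x}^{c}_{^{k|k}}=A\tilde{x}^{c}_{^{k-1|k-1}}+w_{k-1}$, i.e.\ it evolves under the \emph{open-loop} matrix $A$, which in the cases of interest is unstable, so the weights $A^{j}$ do not decay geometrically. The correct and much shorter justification is the one the paper implicitly uses: since $\hat{x}^{c}_{^{k|k}}$ is the minimum mean square error estimate, $\E[x_k^Tx_k]=\E[\|\hat{x}^{c}_{^{k|k}}\|^2]+\tr\{\E[P_{^{k|k}}]\}\ge\tr\{\E[P_{^{k|k}}]\}$ (equivalently, the zero estimator is suboptimal, so $\tr\{\E[P_{^{k|k}}]\}\le\E[x_k^Tx_k]$ with $c'=1$ and no additive constant). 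With that one-line repair your proposal is sound.
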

\begin{proof}
The estimate at the controller in (\ref{Eq:EstimateO}) can be rewritten as
\begin{align}
\hat{x}^c_{^{k|k}} &= (A-BL) \hat{x}^c_{^{k-1|k-1}} + \delta_k (A \tilde{x}^c_{k-1|k-1} + w_{k-1}) \; . \label{Eq:EstAlt}
\end{align}
Since $\hat{x}^c_{^{k-1|k-1}}$ is the minimum mean square error estimate~\cite{Ramesh2013}, we have $\E[x_k^T x_k] = \tr\{(A-BL) \E[\hat{x}^{c}_{^{k-1|k-1}} (\hat{x}^{c}_{^{k-1|k-1}})^{T}] (A-BL)^T\} + \tr\{\E[P_{^{k|k}}]\}$, which must be bounded in steady state for stability, as per Definition~\ref{Def:MSS}. Certainty equivalence implies that the control law ensures mean square boundedness of the estimate $\hat{x}^c_{^{k-1|k-1}}$ in (\ref{Eq:EstAlt}). Hence, the stability condition depends only on the estimation error, so $x_k$ possesses Lyapunov mean square stability iff $\limsup_{k \rightarrow \infty} \tr\{\E[P_{^{k|k}}]\} \le \varsigma$. \hfill \qed
\end{proof}
In the rest of the paper, we identify sufficient conditions that guarantee Lyapunov mean square stability, in the sense of (\ref{Eq:LMSS_CE}), for the states of each of the $M$ control systems described above. Furthermore, we seek a design procedure for selecting the event thresholds, $\Delta_d$, so as to guarantee Lyapunov mean square stability for the overall network of systems.

\begin{figure}[tb]
\centering
\def\svgwidth{8cm}
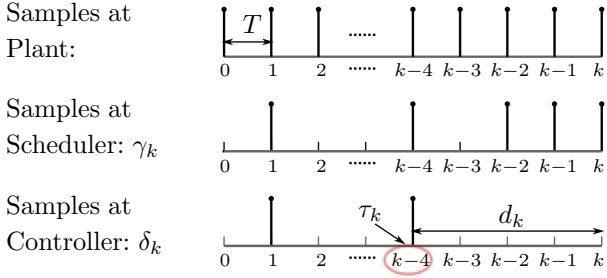
\caption{An illustration of the delay since the last received packet, $\smash{d_k}$, and the index of the last received packet, $\smash{\tau_k}$. Events ($\smash{\gamma_k = 1}$) are chosen from the samples, and only some events are successfully transmitted ($\smash{\delta_k = 1}$). } \label{Fig:TimeLines}
\end{figure}

\subsection{Network Interaction Model} \label{SS:JointAnal}
We have defined a model and a notion of stability for each control system. Next, we model the interactions in the network of $M$ control systems, and define a notion of stability for the entire network. We first present an example to motivate the need for such a model.

\begin{figure*}[tb]
\centering
\includegraphics*[scale=0.45,viewport=60 20 1050 300]{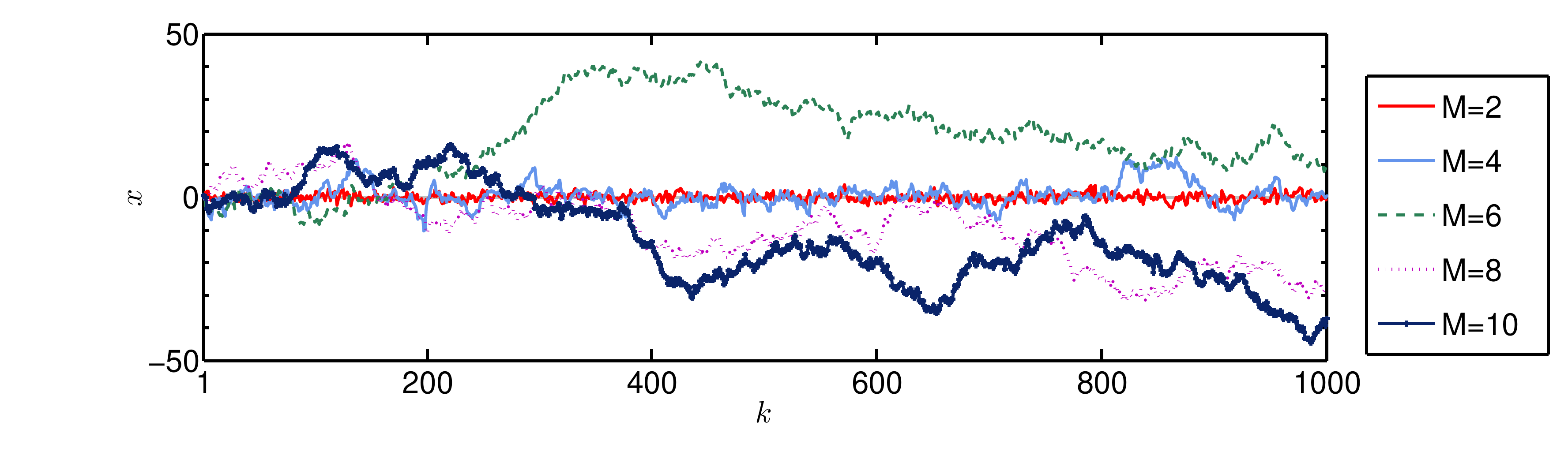}
\caption{A comparison of the trace of the plant state $x$ for identical plants in different sized networks; the event-triggering design appears to result in stability for small sized networks only. }
\label{Fig:MultiX}
\end{figure*}

\begin{example} \label{Ex:LMSSvisualizer}
We consider scenarios corresponding to $M \in \{2,4,6,8,10\}$ event-based systems. The rest of the parameters, chosen identically for all the systems in the network, are $A=1$, $B=1$, $\Sigma_w = 1$ and $\Delta_d=0.25$, $\forall d>0$. The network uses $p$-persistent CSMA with $10$ retransmissions in the CRM. Retransmissions improve the performance of a CRM, and are further explained in Remark~\ref{Remark:Retx}. From the simulations shown in Fig.~\ref{Fig:MultiX}, we see that the upper bound of the state magnitude varies with $M$.
\end{example}
The result of the example is not surprising. However, it is not clear how we can identify the network size that can be supported by a given event-triggering policy. Equivalently, for a given network size, how can we identify a stabilizing event-triggering policy? From the example, it is clear that the network interaction between the event-based systems provides the answer. Thus, we now present a model for the network interactions, and derive stability conditions and stabilizing designs using this model.

We use a Markov chain to jointly model the event-triggering policy and CRM, through which each control system interacts with the rest of the network. Since this model applies to each control system, we skip the index $j$ unless we need it to explain the interaction between multiple systems. However, it is useful to keep in mind that every parameter in the following discussion, including probabilities, are unique to each control system, and must be understood to be indexed by $j$. The delay $d$ and an index $S$ are used to denote each state in the Markov chain in Fig.~\ref{Fig:MCAnalysis}. The index $S \in \{I,N,E,T\}$ denotes an idle state ($I$), a non-event state ($N$), an event-state ($E$) and a transmission state ($T$), respectively. We denote the steady state probability of the state $(S,d)$ as $\pi_{_{(S,d)}}$, and compute these values in Lemma~\ref{Lemma:RelAnal}. A successful transmission brings the system to state $(I,0)$, where it awaits the next sampling instant. If the packet is not transmitted, either due to a collision or the lack of an event, the delay increases.

Let us trace through the chain for some delay $d_{k-1}=d-1$, beginning with the plant in the idle state $(I,d-1)$. At the next sampling instant $k$, the state $x_k$ is declared to be an event or a non-event. The control system transitions to $(E,d)$ with event probability $p_{_{\gamma,d}} \triangleq \Pr(\gamma_k=1|d_{k-1}=d-1)$, or to $(N,d)$ with complimentary event probability $q_{_{\gamma,d}} = 1-p_{_{\gamma,d}}$, respectively. From the non-event state $(N,d)$, the system transitions directly to the next idle state $(I,d)$, to wait for the next sampling instant.

An event is sent to the CRM, where it is either transmitted or suppressed. The control system transitions to $(T,d)$ with persistence probability $p_{_{\alpha}}$, or returns to the next idle state $(I,d)$ with complimentary persistence probability $q_{_{\alpha}} = 1 = p_{_{\alpha}}$, respectively. A system in the transmission state, $(T,d)$, sees a busy channel if another control system in the network is in one of its transmission states, $(T,d)$, for any $d > 0$. This happens with probability $p$, and the packet is lost due to a collision. The system then returns to the idle state $(I,d)$. With complimentary probability $q=1-p$, the transmission is successful, and the system transitions to the state $(I,0)$, with the delay reset to zero.

\begin{figure*}[tb]
\begin{center}
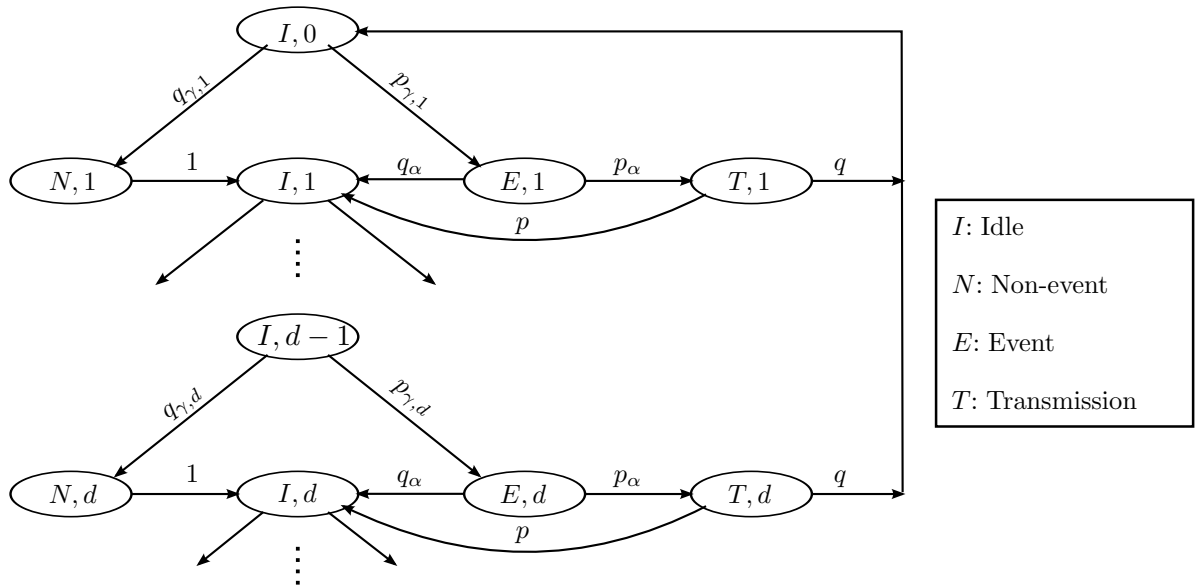
\caption{A Markov chain representation for the event-triggering policy in (\ref{Eq:InnoSched}) and $p$-persistent CSMA with no retransmissions. }
\label{Fig:MCAnalysis}
\end{center}
\end{figure*}

We now present our first assumption, used in the construction of the Markov model.
\begin{assumption}\textbf{\upshape{(Bianchi's Conditional Independence Assumption)}} \label{Assume:Bianchi}
The conditional probability of a busy channel for a node that is ready to transmit (in state $(T,d)$, for $d > 0$), is given by an independent probability $p$. This probability, called the busy channel probability, can be evaluated as
\begin{equation} \label{Eq:pBianchi}
p^{j} = 1 - \prod_{i \neq j, i = 1} ^{M} \big( 1- \sum_{d=1}^{\infty} \pi^{i}_{_{(T,d)}} \big) \; ,
\end{equation}
where, $\pi^{i}_{_{(T,d)}}$ is the steady state probability of the $i^{\textrm{th}}$ control system being in the state $(T,d)$ in the Markov model in Fig.~\ref{Fig:MCAnalysis}.
\end{assumption}

This assumption was introduced by Bianchi in his much-acclaimed analysis of CSMA/CA in $802.11$, and has been verified by many studies~\cite{Bianchi2000,Ramachandran2007,Pollin2008}. For the problem setup considered in this paper, it is verified through simulations in~\cite{Ramesh2011b}.
\begin{remark}\textbf{\upshape{(Extension to CSMA with Multiple Retransmissions)}} \label{Remark:Retx}
The CRM typically makes multiple attempts to transmit the same packet, within a single plant sampling period. This is because the operational time-scale of the CRM is much finer that that of the control system itself. The Markov model presented above can also be used to model such a CRM, by redefining the conditional probability of a busy channel $p$. We do this by first defining a unique conditional probability of a busy channel, $p_r$, for each retransmission attempt $1 \le r \le r_{\max}$. Applying Assumption~\ref{Assume:Bianchi} to each retransmission attempt, the conditional probability of a busy channel in the $r^{\textrm{th}}$ retransmission instant is given by
\begin{equation} \label{Eq:pCondr}
p^{j}_r = 1- \prod_{i \neq j, i = 1}^{M} \big( 1- \sum_{d=1}^{\infty} \pi^{i}_{_{(T,d,r)}}\big) \; ,
\end{equation}
where $\pi^{i}_{_{(T,d,r)}}$ is the steady state probability of the $i^{\textrm{th}}$ control system being in the state $(T,d)$ during the $r^{\textrm{th}}$ retransmission attempt. We now redefine the busy channel probability $p$ to represent an aggregate conditional probability of a busy channel across all the retransmission instants, as given by
\begin{equation} \label{Eq:pBianchiRetx}
p = 1 - \frac{1}{p_{_{\alpha}}} \left( 1 - \prod_{r=1}^{r_{\max}} (1 - p_{_{\alpha}} q_r) \right) \; ,
\end{equation}
where $q_r$ denotes the complimentary probability $1-p_r$.
\end{remark}

We now analyze the reliability of a link, defined as the probability of a successful transmission, in this network.
\begin{lemma}[Reliability Analysis~\cite{Ramesh2011b}] \label{Lemma:RelAnal}
For a network of event-based systems described by \eqref{Eq:StateSpace}--\eqref{Eq:Controller} under Assumption~\ref{Assume:Bianchi}, the network reliability in steady state is given by $\lim_{k \rightarrow \infty} \Pr(\delta_k=1) = \pi_{_{(I,0)}} = q \cdot \sum_{d=0}^{\infty} \pi_{_{(T,d)}}$.
\end{lemma}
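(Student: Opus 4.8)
The plan is to read the reliability off the stationary behaviour of the chain in Fig.~\ref{Fig:MCAnalysis} and then close the argument with a single balance equation. First I would record the standing hypotheses that make ``steady state'' meaningful: under Assumption~\ref{Assume:Bianchi} the busy-channel probability $p$ is a fixed constant, so the object in Fig.~\ref{Fig:MCAnalysis} is a genuine time-homogeneous Markov chain on the countable state space $\{(S,d)\}$, and I would invoke (or take as part of the steady-state hypothesis) its irreducibility and positive recurrence --- equivalently, that the delay process does not drift to infinity, which is exactly the regime in which $(I,0)$ is visited infinitely often and $\sum_{d\ge 1}\pi_{_{(T,d)}}>0$. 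Positive recurrence yields the existence and uniqueness of the stationary distribution $\{\pi_{_{(S,d)}}\}$ together with convergence of the time-$k$ occupation probabilities to it.

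The second step is the interpretation $\lim_{k\to\infty}\Pr(\delta_k=1)=\pi_{_{(I,0)}}$. The crucial structural fact, visible in the transition diagram, is that $(I,0)$ is entered only through the successful-transmission branch $(T,d)\to(I,0)$ (probability $q$) and is left with total probability one at the following sampling instant, so its sojourn has length one; hence occupying $(I,0)$ is, modulo the alignment between sampling instants and the epochs of the chain, precisely the event that a successful transmission has just occurred, i.e.\ that $\delta_k=1$. Passing to the limit and using the convergence from the first step gives the first equality.

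The third step is the balance identity $\pi_{_{(I,0)}}=q\sum_{d=0}^{\infty}\pi_{_{(T,d)}}$. I would write the global balance (``cut'') equation isolating the single state $(I,0)$: its entire incoming probability flux is $\sum_{d\ge 1} q\,\pi_{_{(T,d)}}$ (the success branches from all transmission states), while the outgoing flux is $\pi_{_{(I,0)}}$, since the chain leaves $(I,0)$ with total probability one. Equating the two gives $\pi_{_{(I,0)}}=q\sum_{d\ge 1}\pi_{_{(T,d)}}$; since a transmission state necessarily has delay at least $1$ (the delay is reset only on arriving at $(I,0)$, and the next sample then carries delay $1$), $\pi_{_{(T,0)}}=0$ and the sum may be started at $d=0$, giving the stated form.

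I expect the main obstacle to be making the second step rigorous, namely the bookkeeping that relates the sampling-instant index $k$ of $\delta_k$ to the epochs of the chain: a single sampling period corresponds to more than one transition (through $N$, or $E$, or $E$ then $T$), so one must be careful that the quantity emerging in the limit is $\pi_{_{(I,0)}}$ itself rather than, say, a ratio involving the aggregate idle-state probability. The clean way to handle this is to observe the chain at its idle epochs and apply a renewal-reward argument to conclude that the long-run fraction of sampling periods terminating in a successful transmission equals $\pi_{_{(I,0)}}$; verifying the positive-recurrence condition that legitimises these limits is the other point requiring care, and it is precisely the condition under which the claimed formula is non-vacuous.
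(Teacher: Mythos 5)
Your proposal is correct and follows essentially the same route as the paper: both read the result off the stationary distribution of the chain in Fig.~\ref{Fig:MCAnalysis}, identifying $\Pr(\delta_k=1)$ with $\pi_{_{(I,0)}}$ and obtaining the second equality from balance at $(I,0)$ (the paper writes the forward recursions (\ref{Eq:pId})--(\ref{Eq:pRd}) and normalizes $\sum_d \pi_{_{(I,d)}}=1$, which is the same computation). You are in fact more careful than the paper about the two delicate points --- positive recurrence legitimising the limits, and the embedding of the multi-transition sampling period at the idle epochs so that the limit is $\pi_{_{(I,0)}}$ itself --- both of which the paper handles only implicitly through its choice of normalization.
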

\begin{proof}
The steady state distribution of the Markov chain, $\pi_{_{(S,d)}}$ corresponding to the state $(S,d)$, can be calculated when Assumption~\ref{Assume:Bianchi} holds. The steady state probabilities of a node in the states $(I,d)$ and $(T,d)$, respectively, are given by
\begin{align}
\pi_{_{(I,d)}} &= (1-p_{_{\gamma,d}} p_{_{\alpha}} q) \pi_{_{(I,d-1)}} \; , \label{Eq:pId} \\
\pi_{_{(T,d)}} &= p_{_{\gamma,d}} p_{_{\alpha}} \pi_{_{(I,d-1)}} \; . \label{Eq:pRd}
\end{align}
Then, the probability of a successful transmission is given by $\Pr(\delta_k=1) = \pi_{_{(I,0)}}$ and can be obtained by simultaneously solving $\sum_{d=0}^{\infty} \pi_{_{(I,d)}} = 1$ with (\ref{Eq:pBianchi}) or (\ref{Eq:pBianchiRetx}). \hfill \qed
\end{proof}

We now define network steady state as a notion of stability for the network of $M$ control systems.
\begin{definition} \label{Def:NetworkSS}
The network of $M$ control systems is said to be in steady state when $0 \le p < 1$, for the busy channel probability $p$ in (\ref{Eq:pBianchiRetx}).
\end{definition}

When $p=1$, no transmissions occur in the Markov chain in Fig.~\ref{Fig:MCAnalysis}. Thus, the network steady state property simply implies that at least some transmissions occur successfully in the network. We show that network steady state is a necessary condition for Lyapunov mean square stability, for unstable plants.
\begin{proposition} \label{Prop:SteadyStateNecForLMSS}
For unstable plants with spectral radius $\rho(A) > 1$ in the network given by (\ref{Eq:StateSpace})--(\ref{Eq:Controller}), network steady state is a necessary condition for Lyapunov mean square stability, under Assumption~\ref{Assume:Bianchi}.
\end{proposition}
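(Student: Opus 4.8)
The plan is to argue by contraposition: I will show that if the network is \emph{not} in steady state, i.e.\ $p=1$ in \eqref{Eq:pBianchiRetx}, then for an unstable plant the condition \eqref{Eq:LMSS_CE} fails for \emph{every} $\varsigma$, uniformly in the initial condition, and hence by Lemma~\ref{Lemma:Lyapunov mean square stability_CE} the plant state cannot be Lyapunov mean square stable. Throughout I would work within the Bianchi-decoupled model of Fig.~\ref{Fig:MCAnalysis}, which Assumption~\ref{Assume:Bianchi} permits.

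The first step is to translate $p=1$ into a statement about the channel access indicator $\delta_k$. Under Assumption~\ref{Assume:Bianchi}, a node in any transmission state $(T,d)$ returns to $(I,d)$ with probability $p$; when $p=1$ the only edge leading into $(I,0)$, namely the $(T,d)\to(I,0)$ edge of weight $q=1-p$, is never traversed, so $(I,0)$ is unreachable for $k\ge 1$. Equivalently $q=1-p=0$, and Lemma~\ref{Lemma:RelAnal} gives $\Pr(\delta_k=1)=q\cdot\sum_{d}\pi_{_{(T,d)}}=0$, so $\delta_k=0$ almost surely for all $k\ge 1$ — this is precisely the ``no transmissions occur'' observation noted after Definition~\ref{Def:NetworkSS}.

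The second step is to propagate $\delta_k=0$ through the estimator. Setting $\delta_k=0$ in \eqref{Eq:EstAlt} collapses the estimation error to the open-loop recursion $\tilde{x}^c_{^{k|k}}=A\tilde{x}^c_{^{k-1|k-1}}+w_{k-1}$, so the error covariance obeys the Lyapunov iteration $P_{^{k|k}}=A P_{^{k-1|k-1}}A^{T}+R_w$, and unrolling gives $\E[P_{^{k|k}}]=A^{k-1}\E[P_{^{1|1}}](A^{k-1})^{T}+\sum_{i=0}^{k-2}A^{i}R_w (A^{i})^{T}$. Taking traces and discarding the positive semidefinite initial term, $\tr\{\E[P_{^{k|k}}]\}\ge\sum_{i=0}^{k-2}\tr\{A^{i}R_w(A^{i})^{T}\}\ge\lambda_{\min}(R_w)\sum_{i=0}^{k-2}\|A^{i}\|_F^{2}\ge\lambda_{\min}(R_w)\sum_{i=0}^{k-2}\rho(A)^{2i}$, which diverges since $\rho(A)>1$; crucially this lower bound is independent of $x_0$. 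Hence no finite $\varsigma$ can satisfy \eqref{Eq:LMSS_CE}, so by Lemma~\ref{Lemma:Lyapunov mean square stability_CE} no $\xi(\zeta)$ can exist in Definition~\ref{Def:MSS}, and network steady state is therefore necessary.

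The main obstacle is the first step: one must be careful that, inside the decoupled Markov model, $p=1$ genuinely suppresses \emph{every} successful transmission rather than merely sending the steady-state transmission rate to zero — which is exactly the structural observation that $(I,0)$ is entered only through the weight-$q$ edge out of a transmission state. A secondary, routine point is the divergence estimate when $R_w$ is only positive semidefinite: in that case one instead pairs the Lyapunov iteration with a left eigenvector of $A$ associated with an eigenvalue of modulus greater than one and invokes positivity of $R_w$ on the unstable subspace; I would simply state the generic assumption $R_w\succ 0$ to bypass this.
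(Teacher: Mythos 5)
Your proof is correct and follows essentially the same route as the paper's: when $p=1$ the weight-$q$ edge into $(I,0)$ is never traversed, so no transmission ever succeeds, the estimation error evolves open-loop, and $\tr\{\E[P_{^{k|k}}]\}$ diverges for $\rho(A)>1$. The paper states this much more tersely (all finite-delay states are transient, and the variance at infinite delay is unbounded), whereas you make the divergence estimate explicit and correctly flag the need for $R_w\succ 0$ (or positivity of the noise on the unstable subspace), a technical point the paper leaves implicit.
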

\begin{proof}
The states $(S,d)$, $\forall S \in \{I,N,E,T\}, d \ge 0$, are transient when the busy channel probability $p = 1$, except for the infinite-delay states. For an unstable system, the condition for Lyapunov mean square stability given by (\ref{Eq:LMSS_CE}) cannot be satisfied when $p = 1$, as the variance of the estimation error at infinite delay is not bounded. \hfill \qed
\end{proof}

The above lemma clarifies that a control system cannot be Lyapunov mean square stable without network stability, in the sense defined above. Thus, we begin with the necessary condition that network steady state exists, and then proceed to find conditions for Lyapunov mean square stability. Network steady state is not sufficient to guarantee Lyapunov mean square stability. However, the Lyapunov mean square stability conditions we derive guarantee that network steady state holds.

\begin{remark}\textbf{\upshape{(Necessary Conditions for Bianchi's Assumption to Hold)}}
The existence of the independent process in Bianchi's assumption has been studied in~\cite{Bordenave2010}, among others. The conditions for the decoupling to occur would provide necessary conditions for Proposition~\ref{Prop:SteadyStateNecForLMSS}. An analysis of such conditions for stability of the control system is out of the scope of this paper.
\end{remark}

\begin{remark}\textbf{\upshape{(From Event Thresholds to Event Probabilities)}}
Note that the event thresholds do not directly appear in the Markov model in Fig.~\ref{Fig:MCAnalysis}. The model uses a set of event probabilities $\{p_{_{\gamma,d}}\}$, in place of the event thresholds $\{\Delta_d\}$ to represent the event-triggering policy. The event probabilities are obtained using the event thresholds and the underlying distributions. This alternative representation affords no loss of generality.
\end{remark}

We now summarize the design approach used in this paper. We use a two step strategy to design a stabilizing event-triggering policy. First, we select a stabilizing set of event probabilities, and then, we find a set of event thresholds that result in the designed event probabilities. The motivation for this strategy is because our analysis of a network of event-based systems is parameterized by the event probabilities, rather than the triggering levels, as we saw in the above model. To select suitable event probabilities, we first find conditions that guarantee stability of the control systems in the network, and then outline a design process based on these conditions, as depicted in Fig.~\ref{Fig:FlowChart}.

\begin{figure}[tb]
\begin{center}
\def\svgwidth{8cm}
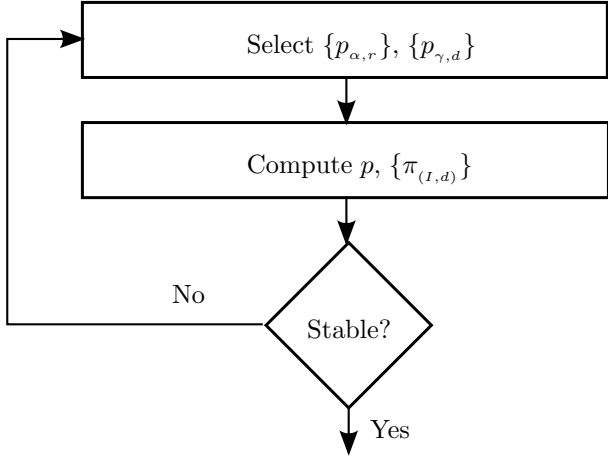
\caption{The event probabilities $\{p_{_{\gamma,d}}\}$ and the persistence probability $\{p_{_{\alpha,r}}\}$ are the inputs to our design process. The stability conditions that we derive in this paper give us a stability guarantee, as an output. The Markov model parameters, $p$ in (\ref{Eq:pBianchiRetx}) and $\{\pi_{_{(I,d)}}\}$ in (\ref{Eq:pId}), are intermediary parameters that must be computed to check the stability conditions. }
\label{Fig:FlowChart}
\end{center}
\end{figure}

\section{Stability Analysis} \label{S:Results}

We use the Markov chain in Fig.~\ref{Fig:MCAnalysis} to analyze the stability of each control system in this section. This analysis includes the effect of all the other control systems in the network through the parameters of the Markov chain. We dispense with the system index $j$ in this section, as our analysis applies to each control system in the network. We begin with the main stability result in this section, and then develop its proof. To arrive at this proof, we examine the underlying density of the estimation error, and construct an auxiliary system that furnishes an upper bound for the variance of each control system.

\subsection{Main Result: Stability Conditions for the Markov Chain}
We begin by presenting one of the main results of the paper. It is a sufficient condition for stabilizing the Markov chain in Fig.~\ref{Fig:MCAnalysis}, in a Lyapunov mean square sense.

\begin{theorem} \label{Thm:SuffGeoSeries}
Consider the network of control systems (\ref{Eq:StateSpace})--(\ref{Eq:Controller}) and suppose Assumption~\ref{Assume:Bianchi} holds. Let $\rho(A_j)$ denote the spectral radius of the $j^{\textrm{th}}$ control system. For $1 \le j \le M$, if
\begin{equation}
\limsup_{d \to \infty} \frac{\pi^{j}_{_{(I,d+1})}}{\pi^{j}_{_{(I,d)}}} < \frac{1}{1+\rho(A_j)^2} \label{Eq:SuffGeoSeries}
\end{equation}
holds, then each of the control systems in the network is Lyapunov mean square stable.
\end{theorem}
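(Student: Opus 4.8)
The starting point is Lemma~\ref{Lemma:Lyapunov mean square stability_CE}: it is enough to produce, for each loop $j$, a finite $\varsigma$ with $\limsup_{k\to\infty}\tr\{\E[P_{k|k}]\}\le\varsigma$. The plan is to decompose this quantity over the states of the Markov chain of Fig.~\ref{Fig:MCAnalysis}. Conditioning on the delay $d_k=d$ (equivalently, on the chain resting in $(I,d)$) and using \eqref{Eq:EstAlt} together with the plant dynamics \eqref{Eq:StateSpace}, which give $\tilde{x}^c_{k|k}=(1-\delta_k)(A\tilde{x}^c_{k-1|k-1}+w_{k-1})$, one sees that on $\{d_k=0\}$ the estimation error is zero, while on $\{d_k=d\}$, $d\ge1$, it equals $A\tilde{x}^c_{k-1|k-1}+w_{k-1}$ with $\tilde{x}^c_{k-1|k-1}$ distributed as the delay-$(d-1)$ error \emph{further conditioned} on the event that the level test \eqref{Eq:InnoSched} and the CRM route the chain to $(I,d)$ rather than resetting it to $(I,0)$. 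Writing $M_d$ for the steady-state conditional error covariance given delay $d$, and recalling that $\Pr(d_k=d)\to\pi_{_{(I,d)}}$ with $\sum_d\pi_{_{(I,d)}}=1$, the target reduces to $\sum_{d\ge0}\pi_{_{(I,d)}}\tr\{M_d\}<\infty$.

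Next I would turn that description into a recursion for $M_d$ in the delay index. It takes the form of an affine image of $M_{d-1}$ under $A$, plus the fresh noise $R_w$, minus a ``reset/truncation'' term that is positive semidefinite but whose evaluation requires the law of the estimation error — which the event rule \eqref{Eq:InnoSched} renders non-Gaussian, in fact a path-dependent mixture of truncated and tail-conditioned Gaussians. This is exactly the obstruction flagged in the paper, and it is circumvented by constructing an auxiliary system: using majorization theory, the true conditional error at delay $d$ is dominated — in the positive-semidefinite, hence also the trace, order — by the state of a tractable surrogate whose covariance $\overline M_d$ satisfies $\overline M_d\succeq M_d$, $\overline M_0=0$, and a clean linear recursion in which one delay step can, in the worst case, re-amplify the accumulated error through $A$ \emph{and} add back a copy of it together with the injected noise. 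Consequently $\tr\{\overline M_d\}$ grows at most geometrically, $\tr\{\overline M_d\}\le c_1\,(1+\rho(A)^2)^d$ for a finite $c_1$ depending only on $A$, $R_w$ and $R_0$; the factor $1+\rho(A)^2$ is where the $\rho(A)^2$ amplification and the unit term (fresh noise together with the majorization slack) combine. Making this majorization/stochastic-dominance step rigorous for the mixed conditional law is the bulk of the technical work and the real obstacle in the proof.

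Finally, combining the two ingredients gives $\limsup_{k\to\infty}\tr\{\E[P_{k|k}]\}\le\sum_{d\ge0}\pi_{_{(I,d)}}\tr\{\overline M_d\}\le c_1\sum_{d\ge0}\pi_{_{(I,d)}}(1+\rho(A)^2)^d$, a series of nonnegative terms whose consecutive ratio tends to $\big(\limsup_{d\to\infty}\pi_{_{(I,d+1)}}/\pi_{_{(I,d)}}\big)\,(1+\rho(A)^2)$. By the ratio test the series converges precisely when that product is strictly less than one, i.e. under \eqref{Eq:SuffGeoSeries}; splitting the sum into a finite head and a geometrically convergent tail then yields the explicit finite bound $\varsigma$, and since $\rho(A)$ is replaced by $\rho(A_j)$ and $\pi$ by $\pi^j$ loop by loop, every control system in the network is Lyapunov mean square stable. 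As a by-product, \eqref{Eq:SuffGeoSeries} forces $\pi_{_{(I,d)}}\to0$, hence by \eqref{Eq:pId} forces $p<1$, so the condition also certifies network steady state in the sense of Definition~\ref{Def:NetworkSS}, in agreement with the necessity statement of Proposition~\ref{Prop:SteadyStateNecForLMSS}.
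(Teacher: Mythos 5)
Your plan follows essentially the same route as the paper's proof: decompose $\limsup_k\tr\{\E[P_{k|k}]\}$ over the steady-state delay distribution $\{\pi_{_{(I,d)}}\}$, dominate the conditional variance at delay $d$ by an auxiliary ``never-successfully-transmitting'' system whose covariance obeys $\hat{P}_d=\rho(A)^2\hat{P}_{d-1}+\Sigma_w$ (so that $\tr\{\hat P_{d+1}\}/\tr\{\hat P_d\}\le 1+\rho(A)^2$, which is where your cruder bound $c_1(1+\rho(A)^2)^d$ and the paper's exact geometric sum coincide in the ratio test), and conclude via the ratio test under \eqref{Eq:SuffGeoSeries}. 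The one step you defer --- showing that the event-triggered truncation/conditioning cannot increase the variance, so the Gaussian worst case dominates --- is exactly what the paper spends Lemmas~\ref{Lemma:HatPhi}, \ref{Lemma:HatPhiVec} and \ref{Lemma:ReducedVariance} (via symmetric rearrangement and majorization) establishing, and you correctly identify it as the technical crux rather than glossing over it.
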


The proof is presented in Section~\ref{S:Proof}. The above result requires the probability of the idle states in the tail of the Markov model in Fig.~\ref{Fig:MCAnalysis} to decrease in a stipulated manner, as determined by the spectral radius of each control system $\rho(A)$. The larger the value of $\rho(A)$, the sharper is the mandated fall off in the probabilities of the idle states.

The role of the spectral radius suggests a similarity to other mean square stability results in networked control systems, particularly for packet losses in the sensing or actuation channel~\cite{Gupta2010,Kar2012} and encoder design for data rate limited channels~\cite{Nair2004,Tatikonda2004a,Tatikonda2004b}. The results for packet loss channels specifies a critical probability of loss, beyond which a control system cannot be stabilized in the mean square sense. This result is obtainable only under a Bernoulli packet loss model, which cannot be applied to our problem setup. The stability result in the case of encoder design specifies a stabilizing rate, derived from a source coding perspective. The Markov model we consider in Fig.~\ref{Fig:MCAnalysis} is quite general, and a more specific stability result is difficult to find. In practice, one must find a finite parameterization of the Markov model parameters to obtain a stability condition that can be checked, as illustrated for an event-triggering design that we present in Section~\ref{S:SchedDesign}.

\begin{figure*}[tb]
\centering
\includegraphics*[width=15cm,viewport=0 150 1150 450]{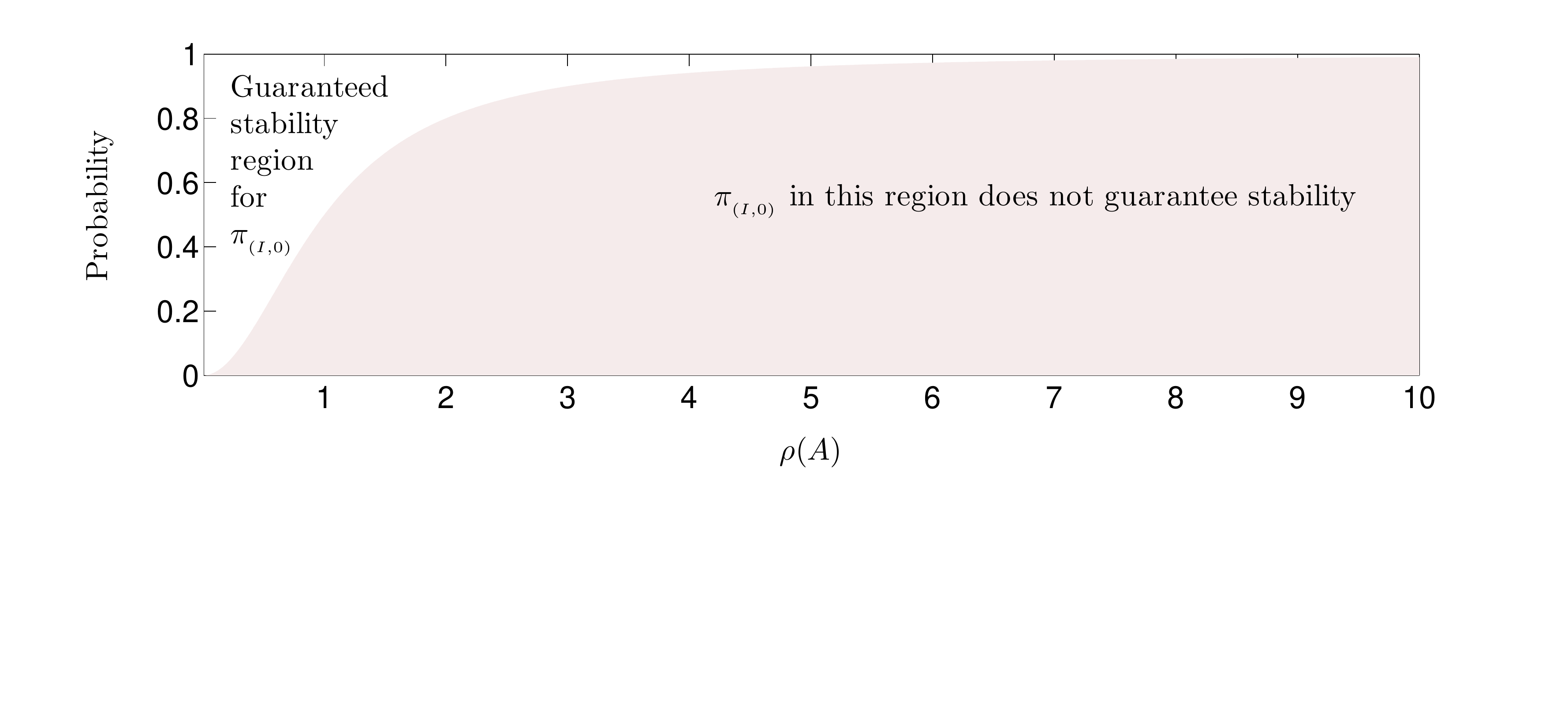}
\caption{A sufficient condition for Lyapunov mean square stability that requires the network reliability, $\pi_{_{(I,0)}}$, to be greater than the line demarcating the regions, with respect to the spectral radius $\rho(A)$. Thus, unstable processes require a higher network reliability to guarantee stability. }
\label{Fig:LowerBoundpI0}
\end{figure*}

\begin{remark}\textbf{\upshape{(Lyapunov mean square stability implies network steady state)}}
Network steady state is not sufficient to guarantee Lyapunov mean square stability. This can be seen by noting that the condition for the busy channel probability, $p < 1$, as required by network steady state, implies that $\pi_{_{(I,d+1)}} < \pi_{_{(I,d)}}$. Thus, network steady state ensures that the loop is sometimes closed, as against the case $p=1$, when the loop is never closed. But, this feedback may not be sufficient to stabilize the control system, in the sense of Definition~\ref{Def:MSS}. However, Lyapunov mean square stability for all $M$ control systems in the network ensures a network steady state, in the sense of Definition~\ref{Def:NetworkSS}. To see this, note that $\pi_{_{(I,d+1)}} < \frac{\pi_{_{(I,d)}}}{1+\rho(A)^2} < \pi_{_{(I,d)}}$, for all $\rho(A)>0$. Hence, network steady state is indeed achieved by the control systems in stability.
\end{remark}

\begin{remark} \textbf{\upshape{(Guaranteeing Stability for Unstable Processes)}} \label{Remark:pI0}
Let us assume that we choose the event probabilities such that (\ref{Eq:SuffGeoSeries}) is true for all $d \ge 0$, as opposed to the tail of the sequence alone. Then, using (\ref{Eq:SuffGeoSeries}) in $\sum_{d=0}^{\infty} \pi_{_{(I,d)}} = 1$, we get a lower bound for the network reliability as $\pi_{_{(I,0)}} > \rho(A)^2/(1+\rho(A)^2)$. This implies that the network reliability must lie above the line shown in Fig.~\ref{Fig:LowerBoundpI0}, and thus, unstable processes require a higher network reliability to guarantee stability.
\end{remark}

\begin{remark}[Role of the Persistence Probability]
Using the recursive relationship for the idle state probabilities $p_{_{(I,d+1)}}$ in (\ref{Eq:pId}), along with the sufficient condition in (\ref{Eq:SuffGeoSeries}), we obtain
\begin{equation*}
\limsup_{d \to \infty} \frac{(1-p_{_{\gamma,d+1}} p_{_{\alpha}} q) \pi_{_{(I,d)}}}{\pi_{_{(I,d)}}} < \frac{1}{1+\rho(A)^2} \; ,
\end{equation*}
which can be rearranged to obtain $\limsup_{d \to \infty} p_{_{\gamma,d}} q > \kappa_{_{\alpha}}$, where $\kappa_{_{\alpha}} = \frac{1}{p_{_{\alpha}}} \frac{\rho(A)^2}{1+\rho(A)^2}$. The value of $\kappa_{_{\alpha}}$ can be tuned by varying the persistence probability $p_{_{\alpha}}$. A small value for $p_{_{\alpha}}$ can increase the lower bound for $p_{_{\gamma,d}} q$, which in turn can improve the network reliability.
\end{remark}

We first attempt to prove Theorem~\ref{Thm:SuffGeoSeries} directly by examining the underlying density of the estimation error. This is a difficult approach, as we show in Section~\ref{SS:PDFevolve}. Then, we construct auxiliary systems in Sections~\ref{SS:PDFmaj} and~\ref{SS:PDFmajVector}, and use these systems to prove Theorem~\ref{Thm:SuffGeoSeries} in Section~\ref{S:Proof}.

\subsection{A Difficult Direct Approach} \label{SS:PDFevolve}
We seek an expression for the variance of the estimation error. Let us associate with each state $(S,d)$ for $S \in \{I,N,E,T\}$ a probability density function (PDF) for the estimation error (filtered or predicted) at the controller, denoted by $\phi_{_{(S,d)}}$, for the appropriate estimation error corresponding to the state $(S,d)$ of the Markov model.

Then, the variance of the estimation error conditioned on a delay $d$ is given by $\tr\{P_d\}$, where $P_d = \int_{-\infty}^{\infty} \tilde{x} \tilde{x}^T \phi_{_{(I,d)}}(\tilde{x}) d \tilde{x}$. Marginalizing over the idle state distribution, we get
\begin{equation} \label{Eq:EstErrCov}
\tr\{\E[P_{^{k|k}}]\} = \sum_{d=0}^{\infty} \tr\{P_d\} \pi_{_{(I,d)}} \; .
\end{equation}
The above expression is simple, but the PDFs can be hard to evaluate. To see why, let us look at the evolution of these PDFs as the delay $d$ increases. For $d=0$, $\phi_{_{(I,0)}} := \lim_{k \to \infty} \phi(\tilde{x}^c_{^{k|k-1}} | \delta_{k-1} = 1)$ is the PDF associated with the predicted estimate, one step after a transmission. Clearly, $\phi_{_{(I,0)}} = \phi_N(\Sigma_w)$, where $\phi_N$ is the PDF of a normal distribution with covariance $\Sigma_w$. For any delay $d$, the PDFs associated with the event ($\gamma_k = 1$) and non-event ($\gamma_k = 0$) states are truncated versions of the PDF associated with the previous idle state. They can be defined as $\phi_{_{(N,d)}} := \lim_{k \to \infty} \phi(\tilde{x}^c_{^{k|\tau_{k-1}}} | \gamma_k = 0, d_{k-1} = d-1)$ and $\phi_{_{(E,d)}} := \lim_{k \to \infty} \phi(\tilde{x}^c_{^{k|\tau_{k-1}}} | \gamma_k = 0, d_{k-1} = d-1)$, respectively. Thus, we get
\begin{align}
\phi_{_{(N,d)}} &= \begin{cases}
\frac{\phi_{_{(I,d-1)}}(\tilde{x})}{q_{_{\gamma,d}}} & |\tilde{x}| \le \Delta_d \; ,\\
0 & \textrm{otherwise} \; ,
\end{cases} \; , \label{Eq:pdfNd}
\end{align}
\begin{align}
\phi_{_{(E,d)}} &= \begin{cases}
\frac{\phi_{_{(I,d-1)}}(\tilde{x})}{p_{_{\gamma,d}}} & |\tilde{x}| > \Delta_d \; , \\
0 & \textrm{otherwise} \; ,
\end{cases} \label{Eq:pdfEd}
\end{align}
where, $q_{_{\gamma,d}} = \int_{-\Delta_d}^{\Delta_d} \phi_{_{(I,d-1)}}(\tilde{x}) d \tilde{x}$ is the probability of a non-event and $p_{_{\gamma,d}} = 1-q_{_{\gamma,d}}$ is the probability of an event.

Then, let us denote $e_d$ as the innovations process that does not get transmitted after a delay $d$, and denote its PDF as $\phi^e_{_{(I,d)}}:= \lim_{k \to \infty} \phi(\tilde{x}^c_{^{k|k}} | \delta_k=0, d_k=d)$. This PDF can be rewritten as
\begin{align}
\phi^e_{_{(I,d)}} 
= &\phi_{_{(N,d)}}(\tilde{x}) \cdot \frac{q_{_{\gamma,d}}}{q_{_{\gamma,d}}+p_{_{\gamma,d}}(q_{_{\alpha}} + p p_{_{\alpha}})} \\
&\quad + \phi_{_{(E,d)}}(\tilde{x}) \cdot \frac{(q_{_{\alpha}} + p p_{_{\alpha}}) p_{_{\gamma,d}}}{q_{_{\gamma,d}}+p_{_{\gamma,d}}(q_{_{\alpha}} + p p_{_{\alpha}})} \notag
\end{align}
Substituting for $\phi_{_{(N,d)}}$ and $\phi_{_{(E,d)}}$ from (\ref{Eq:pdfNd}) and (\ref{Eq:pdfEd}), respectively, we obtain
\begin{align}
\phi^e_{_{(I,d)}} = &\begin{cases}
\phi_{_{(I,d-1)}}(\tilde{x}) \cdot \frac{1}{q_{_{\gamma,d}}+p_{_{\gamma,d}}(q_{_{\alpha}} + p p_{_{\alpha}})} & |\tilde{x}| \le \Delta_d \\
\phi_{_{(I,d-1)}}(\tilde{x}) \cdot \frac{(q_{_{\alpha}} + p p_{_{\alpha}})}{q_{_{\gamma,d}}+p_{_{\gamma,d}}(q_{_{\alpha}} + p p_{_{\alpha}})} & |\tilde{x}| > \Delta_d
\end{cases} \label{Eq:phied}
\end{align}
Finally, the PDF of the idle state with delay $d$ is denoted $\phi_{_{(I,d)}} := \lim_{k \to \infty} \phi(\tilde{x}^c_{^{k+1|\tau_k}}| d_k = d)$. For a plant with an invertible $A$ matrix\footnote{If the matrix $A$ is non-invertible, the PDF corresponding to the idle state is no longer defined in $\mathbb{R}^n$. A measure on the subspace orthogonal to the null-space of $A$ is absolutely continuous w.r.t. the Lebesgue measure, and the PDF, along with the expected value or covariance, is defined using this measure. Thus, the approach presented in this paper is applicable for plants with non-invertible $A$ matrices. However, for ease of exposition, we present the results assuming that $A$ is invertible.}, we can use the state update equation in (\ref{Eq:StateSpace}) to find an expression for $\phi_{_{(I,d)}}$ as
\begin{equation} \label{Eq:UpdatePDF}
\phi_{_{(I,d)}} = \frac{1}{|\det(A)|} \phi^e_{_{(I,d)}}(A^{-1} \tilde{x}) \ast \phi_N(\Sigma_w) \; ,
\end{equation}
where $\ast$ denotes the convolution operator. 

The above operations must be performed recursively, to obtain the PDF associated with the state $(I,d)$. This computation is in general hard. Hence, we find auxiliary systems that result in upper bounds for the variance of the estimation error in the following subsections.

\subsection{Auxiliary PDFs for First-Order Systems} \label{SS:PDFmaj}

We wish to find an upper bound for the variance of the estimation error in the idle states of the Markov chain. To do this, we must first find a sequence of PDFs, $\hat{\phi}_{_{(I,d)}}$, that are more `spread out' than the PDFs $\phi_{_{(I,d)}}$. We use stochastic majorization to do this. Our approach, in this section, is restricted to first-order systems, due to a symmetry requirement on PDFs. In the following section, we extend these results to higher-order systems using other methods that give us more conservative results. We need the following notation and definitions, adapted from~\cite{Hajek2008}, to define majorization.

\begin{definition} \textbf{\upshape{(Symmetric Non-increasing Function)}} \label{Def:SymmNIfunc}
A function $f:\mathbb{R}^n \rightarrow \mathbb{R}$ is said to be symmetric non-increasing if $f(x) = \phi(|x|)$, for some non-increasing function $\phi$ on $\mathbb{R}^+$, where $|x|$ denotes the Euclidean norm of $x \in \mathbb{R}^n$.
\end{definition}

\begin{figure*}[tb]
\begin{center}
\def\svgwidth{12cm}
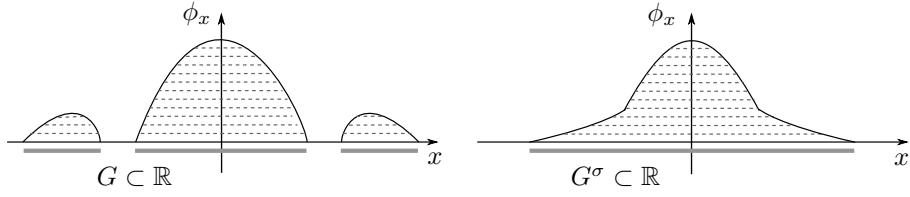
\caption{An illustration of symmetric rearrangement in Definition~\ref{Def:SymmRearrangement}. The level sets of the PDF on the left are symmetrically placed around the origin, to obtain the symmetric rearrangement on the right. An example of a level set is $G$, given by the union of the three shaded segments on the left. The symmetric rearrangement of this level set results in $G^\sigma$ on the right, with length equal to the sum of the lengths of the three shaded segments on the left. From this figure, it is easy to see how the variance of the symmetrically rearranged PDF is always less than the variance of the original PDF (Lemma~\ref{Lemma:EstErrCovOrdering}).}
\label{Fig:SymmRearrange}
\end{center}
\end{figure*}
Given any integrable, non-negative function, we wish to `rearrange' the function to obtain a symmetric non-increasing function. The exact sense in which we rearrange the function is defined below. We begin with a definition for the symmetric rearrangement of a Borel set. Then, we apply this definition to the level sets of a non-negative function, and obtain its symmetric rearrangement. We illustrate this notion in Fig.~\ref{Fig:SymmRearrange}.
\begin{definition} [Symmetric Rearrangement] \label{Def:SymmRearrangement}
Let $G \in \mathcal{B}$ be a Borel set in $\mathbb{R}^n$, with finite Lebesgue measure $\mathcal{L}(G)$. The \emph{symmetric rearrangement} of $G$, denoted by $G^\sigma$, is the open ball in $\mathbb{R}^n$ centered at the origin, with measure $\mathcal{L}(G^\sigma) = \mathcal{L}(G)$.

For an integrable, non-negative function $h$ on $\mathbb{R}^n$, its symmetric non-increasing rearrangement, denoted $h^\sigma$ is given by
\begin{equation} \label{Eq:SymmNDRearrangement}
h^\sigma(x) \triangleq \int_{0}^{\infty} I_{{\{ x': h(x') > l \}}^{\sigma}}(x) dl \; ,
\end{equation}
where $I_{\{x': x' \in G\}^\sigma}(x) = \{x: x \in G^\sigma \}$, denotes the set of elements belonging to the symmetric rearrangement of its argument set $G$.
\end{definition}

We now define majorization with the help of the distribution functions corresponding to the symmetrically rearranged densities.
\begin{definition} [Majorization] \label{Def:Maj}
Given two PDFs $\phi_a$ and $\phi_b$ on $\mathbb{R}^n$, we say that $\phi_a$ majorizes $\phi_b$, denoted as $\phi_a \succ \phi_b$, if
\begin{equation*}
\int_{|x| \le \rho} \phi_a^\sigma(x) d x \ge \int_{|x| \le \rho} \phi_b^\sigma(x) d x \; , \; \forall \rho \ge 0 \; .
\end{equation*}
\end{definition}
Thus, $\phi_a$, as per the above definition is more contained, or less spread out, than $\phi_b$. Some results involving the majorization operator are listed in Appendix~\ref{App:MajLemmas}. The most important consequence for us is that we obtain an upper bound for the estimation error variance. This is stated below.
\begin{lemma} \textbf{\upshape{(Ordering of Estimation Error Variance)}} \label{Lemma:EstErrCovOrdering}
If $\phi_a$ and $\phi_b$ are symmetric non-increasing PDFs on $\mathbb{R}^n$ such that $\phi_a \succ \phi_b$, then $\int_{-\infty}^{\infty} |x|^2 \phi_a(x) dx \le \int_{-\infty}^{\infty} |x|^2 \phi_b(x) dx$.
\end{lemma}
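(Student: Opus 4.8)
The plan is to reduce the statement to a one–dimensional stochastic–dominance comparison between the radial distributions of $\phi_a$ and $\phi_b$, and then to evaluate the second moments via a layer–cake (Cavalieri) representation.

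First I would observe that a symmetric non-increasing PDF coincides (a.e.) with its own symmetric rearrangement, so $\phi_a^\sigma=\phi_a$ and $\phi_b^\sigma=\phi_b$; indeed, each level set $\{\,x:\phi_a(x)>l\,\}$ is already a ball centred at the origin, hence equal to its rearrangement, and \eqref{Eq:SymmNDRearrangement} then reconstructs $\phi_a$ itself. Consequently the hypothesis $\phi_a\succ\phi_b$ of Definition~\ref{Def:Maj} reads simply $\int_{|x|\le\rho}\phi_a(x)\,dx\ge\int_{|x|\le\rho}\phi_b(x)\,dx$ for all $\rho\ge0$. Writing $F_a(\rho):=\int_{|x|\le\rho}\phi_a(x)\,dx$ and $F_b$ analogously — these are the CDFs of the push-forward measures of $\phi_a\,dx$, $\phi_b\,dx$ under $x\mapsto|x|$, each with total mass one — the hypothesis is exactly $F_a(\rho)\ge F_b(\rho)$, equivalently $\bar F_a(\rho):=1-F_a(\rho)\le 1-F_b(\rho)=:\bar F_b(\rho)$ for every $\rho\ge0$.

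Next I would express the second moment by the layer–cake identity $|x|^2=\int_0^\infty \mathds{1}\{|x|^2>t\}\,dt$: by Tonelli (all integrands nonnegative),
\[
\int_{\mathbb{R}^n}|x|^2\,\phi_a(x)\,dx=\int_0^\infty\!\Big(\int_{\mathbb{R}^n}\mathds{1}\{|x|>\sqrt t\}\,\phi_a(x)\,dx\Big)dt=\int_0^\infty \bar F_a(\sqrt t)\,dt\; ,
\]
and the same identity holds with $b$ in place of $a$. Since $\bar F_a(\sqrt t)\le\bar F_b(\sqrt t)$ for all $t\ge0$, integrating in $t$ gives $\int_{\mathbb{R}^n}|x|^2\phi_a(x)\,dx\le\int_{\mathbb{R}^n}|x|^2\phi_b(x)\,dx$. (If the right-hand side is $+\infty$ there is nothing to prove; the computation shows the left-hand side is then dominated in the extended sense, so the claimed inequality holds in all cases.) An equivalent route avoiding the layer cake is the radial integration-by-parts identity $\int|x|^2\phi\,dx=\int_0^\infty 2\rho\,\bar F(\rho)\,d\rho$, after which the same pointwise bound $\bar F_a\le\bar F_b$ concludes.

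The argument is essentially bookkeeping; the only point needing care is the first step — recognising that the symmetry/monotonicity assumption removes the rearrangement from Definition~\ref{Def:Maj} and turns majorization into pointwise domination of the radial CDFs — together with the routine justification of the Tonelli interchange. No deeper estimate is required, and Fig.~\ref{Fig:SymmRearrange} already indicates the geometric intuition behind this monotonicity of the variance.
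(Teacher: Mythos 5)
Your proof is correct, but it takes a genuinely different (and more self-contained) route than the paper. The paper disposes of this lemma in one line by plugging $h(x)=|x|^2$ into its Lemma~\ref{Lemma:Maj4}, which asserts the corresponding expectation ordering for any symmetric non-decreasing positive $h$; the justification given there is an appeal to Schur-concavity. You instead prove the statement from scratch: you first note that symmetric non-increasing densities are fixed by the rearrangement operation, so that $\phi_a\succ\phi_b$ is literally first-order stochastic dominance of the radial CDFs, $\bar F_a(\rho)\le\bar F_b(\rho)$ for all $\rho\ge0$, and then you integrate this pointwise bound through the layer-cake identity $\int|x|^2\phi\,dx=\int_0^\infty\bar F(\sqrt t)\,dt$ (with Tonelli handling the possibly infinite case). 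What your approach buys is transparency and generality: the same two lines prove the full content of Lemma~\ref{Lemma:Maj4} for any $h(x)=g(|x|)$ with $g$ non-decreasing and nonnegative, and it sidesteps the paper's Schur-concavity argument, which is shakier than your computation --- $h(x)=|x|^2$ is quasi-convex rather than quasi-concave, so the paper's stated justification does not apply verbatim, whereas your layer-cake derivation needs no such property. The paper's version buys only brevity, by delegating the work to an appendix lemma. Both arguments rest on the same underlying fact (stochastic dominance of $|X|$ implies ordering of moments), so the mathematical core is shared even though the write-ups differ.
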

\begin{proof}
Use $h(x) = |x|^2$ in Lemma~\ref{Lemma:Maj4} to obtain the results. \hfill \qed
\end{proof}

We now describe the PDFs that we are interested in, as adapted from~\cite{Lipsa2011}.
\begin{definition} [Neat PDF] \label{Def:NeatPDF}
We say that a PDF $\phi$ is \emph{neat} if it is quasi-concave and if there exists a real number $r$ such that $\phi$ is non-decreasing on $(-\infty,r]$ and non-increasing on $[r,\infty)$.
\end{definition}

Note that PDFs on $\mathbb{R}$ are symmetric non-increasing if and only if they are neat and even. Thus, for neat PDFs, the definition of majorization can be directly applied to the PDF itself. 
Using Definition~\ref{Def:Maj}, we find a more spread out $\hat{\phi}_{_{(I,d)}}$, as stated below.
\begin{lemma} \label{Lemma:HatPhi}
Let the auxiliary PDF, $\hat{\phi}_{_{(I,d)}}$, be defined by the recursive relation
\begin{equation*}
\hat{\phi}_{_{(I,d)}} = \frac{1}{A}\hat{\phi}_{_{(I,d-1)}} \ast \phi_N \; ,
\end{equation*}
with $\hat{\phi}_{_{(I,0)}} = \phi_N$. Then, $\phi_{_{(I,d)}} \succ \hat{\phi}_{_{(I,d)}}$ for all $d \ge 0$.
\end{lemma}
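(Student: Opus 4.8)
The plan is to prove the claim by induction on $d$, carrying the majorization ordering $\succ$ through the one-step recursion that builds $\phi_{_{(I,d)}}$ out of $\phi_{_{(I,d-1)}}$. The base case $d=0$ is immediate, since $\phi_{_{(I,0)}} = \phi_N = \hat{\phi}_{_{(I,0)}}$ and $\succ$ is reflexive. For the inductive step I would assume $\phi_{_{(I,d-1)}} \succ \hat{\phi}_{_{(I,d-1)}}$ and track this ordering through the three operations that compose to map $\phi_{_{(I,d-1)}}$ to $\phi_{_{(I,d)}}$: (i) the truncation-and-reweighting $\phi_{_{(I,d-1)}} \mapsto \phi^e_{_{(I,d)}}$ of (\ref{Eq:phied}); (ii) the scaling by $A$; and (iii) the convolution with $\phi_N$ in (\ref{Eq:UpdatePDF}). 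The auxiliary recursion performs only (ii) and (iii), so the whole argument reduces to showing that (i) cannot undo majorization and that (ii)--(iii) preserve it.

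The core step is establishing that the reweighting in (i) only \emph{concentrates} the density, i.e.\ $\phi^e_{_{(I,d)}} \succ \phi_{_{(I,d-1)}}$. By (\ref{Eq:phied}), $\phi^e_{_{(I,d)}}$ multiplies $\phi_{_{(I,d-1)}}$ by a constant $c_1 = (1 - p_{_{\gamma,d}}p_{_{\alpha}}q)^{-1}$ on $\{|\tilde{x}| \le \Delta_d\}$ and by $c_2 = (q_{_{\alpha}} + p\,p_{_{\alpha}})\,c_1$ on $\{|\tilde{x}| > \Delta_d\}$; a short computation (using $p_{_{\gamma,d}} \le 1$) gives $c_2 \le 1 \le c_1$, and normalisation gives $(c_1-1)q_{_{\gamma,d}} = (1-c_2)p_{_{\gamma,d}}$. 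In the first-order setting every iterate is symmetric and unimodal — the base density is Gaussian, and truncation by a centred interval, scaling by a scalar, and convolution with a symmetric unimodal Gaussian all preserve the symmetric non-increasing property — so the inequality of Definition~\ref{Def:Maj} can be checked on the densities directly for every $\rho$ rather than on their symmetric rearrangements: for $\rho \le \Delta_d$ it holds because $c_1 \ge 1$, and for $\rho > \Delta_d$ it reduces to $(c_1-1)q_{_{\gamma,d}} \ge (1-c_2)\int_{\Delta_d < |x| \le \rho}\phi_{_{(I,d-1)}}$, which follows from $\int_{\Delta_d < |x| \le \rho}\phi_{_{(I,d-1)}} \le p_{_{\gamma,d}}$ together with the normalisation identity. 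Transitivity of $\succ$ (Appendix~\ref{App:MajLemmas}) and the inductive hypothesis then yield $\phi^e_{_{(I,d)}} \succ \hat{\phi}_{_{(I,d-1)}}$.

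Finally I would push this ordering through (ii) and (iii) using the invariance properties of $\succ$ collected in Appendix~\ref{App:MajLemmas}: majorization is preserved under scaling both arguments by the same scalar $A$, and under convolving both arguments with the same symmetric non-increasing kernel $\phi_N$ (a rearrangement/Riesz-type inequality). Applying these to $\phi^e_{_{(I,d)}} \succ \hat{\phi}_{_{(I,d-1)}}$ gives
\begin{equation*}
\phi_{_{(I,d)}} = \tfrac{1}{|\det A|}\phi^e_{_{(I,d)}}(A^{-1}\tilde{x}) \ast \phi_N \;\succ\; \tfrac{1}{A}\hat{\phi}_{_{(I,d-1)}} \ast \phi_N = \hat{\phi}_{_{(I,d)}} \, ,
\end{equation*}
which closes the induction and proves the lemma.

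I expect the main obstacle to be step (i): verifying $\phi^e_{_{(I,d)}} \succ \phi_{_{(I,d-1)}}$ cleanly, and, intertwined with it, maintaining the invariant that each iterate is symmetric non-increasing. It is exactly this symmetry that lets us compare the densities directly for every $\rho$ rather than through their symmetric rearrangements, and its failure for $n > 1$ is what restricts the present argument to first-order systems (the higher-order case being handled separately in the next subsection). The remaining ingredients — reflexivity and transitivity of $\succ$, its invariance under common scaling and under convolution with a fixed symmetric non-increasing kernel — I would simply cite from Appendix~\ref{App:MajLemmas} rather than reprove.
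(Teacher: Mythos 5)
Your proposal is correct and follows essentially the same route as the paper's proof: induction on $d$, with the key step being that the truncation-and-reweighting map only concentrates mass, i.e.\ $\phi^e_{_{(I,d)}} \succ \phi_{_{(I,d-1)}}$, verified separately for $\rho \le \Delta_d$ (where the weight is $\ge 1$) and $\rho > \Delta_d$ (where the tail weight is $\le 1$), followed by transitivity and the scaling and convolution invariances from Appendix~\ref{App:MajLemmas}. Your explicit bookkeeping of the invariant that every iterate stays symmetric non-increasing is a point the paper leaves implicit, but the argument is the same.
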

\begin{proof}
We show this using induction. Trivially, at $d=0$, $\phi_{_{(I,0)}} = \hat{\phi}_{_{(I,0)}} = \phi_N$. Let us assume that, for some $d$, $\phi_{_{(I,d)}} \succ \hat{\phi}_{_{(I,d)}}$. Then, from (\ref{Eq:phied}), we can show that $\phi^e_{_{(I,d+1)}} \succ \phi_{_{(I,d)}}$. To see this, recall that $\phi^e_{_{(I,d+1)}}$ is obtained by appropriately combining the truncated PDFs for the event and non-event states, as shown in (\ref{Eq:phied}). Then, we have
\begin{itemize}
\item For $|e| \le \Delta_{d+1}$, we have
\begin{equation*}
\int_{-e}^{e} \frac{\phi_{_{(I,d)}}(\tilde{x})}{q_{_{\gamma,d+1}}+p_{_{\gamma,d+1}}(q_{_{\alpha}} + p p_{_{\alpha}})} d \tilde{x} \ge \int_{-e}^{e} \phi_{_{(I,d)}}(\tilde{x}) d \tilde{x} \; ,
\end{equation*}
because $q_{_{\gamma,d+1}}+p_{_{\gamma,d+1}}(q_{_{\alpha}} + p p_{_{\alpha}}) \le 1$.
\item For $|e| > \Delta_{d+1}$, we have
\begin{align*}
\int_{-\infty}^{-e} &\phi_{_{(I,d)}} (\tilde{x}) \frac{(q_{_{\alpha}} + p p_{_{\alpha}})}{q_{_{\gamma,d+1}} + p_{_{\gamma,d+1}}(q_{_{\alpha}} + p p_{_{\alpha}})} d \tilde{x} \\
&+ \int_{e}^{\infty} \phi_{_{(I,d)}}(\tilde{x}) \frac{(q_{_{\alpha}} + p p_{_{\alpha}})}{q_{_{\gamma,d+1}} + p_{_{\gamma,d+1}}(q_{_{\alpha}} + p p_{_{\alpha}})} d \tilde{x} \\
&\le \int_{-\infty}^{-e} \phi_{_{(I,d)}}(\tilde{x}) d \tilde{x} + \int_{e}^{\infty} \phi_{_{(I,d)}}(\tilde{x}) d \tilde{x} \; ,
\end{align*}
because $\frac{(q_{_{\alpha}} + p p_{_{\alpha}})}{q_{_{\gamma,d+1}} + p_{_{\gamma,d+1}}(q_{_{\alpha}} + p p_{_{\alpha}})} \le 1$.
\end{itemize}

Since $\phi^e_{_{(I,d+1)}} \succ \phi_{_{(I,d)}}$ and $\phi_{_{(I,d)}} \succ \hat{\phi}_{_{(I,d)}}$, we have
\begin{align*}
\phi^e_{_{(I,d+1)}} &\succ \hat{\phi}_{_{(I,d)}} \\
\frac{1}{A} \phi^e_{_{(I,d+1)}}(\frac{\tilde{x}}{A}) &\succ \frac{1}{A} \hat{\phi}_{_{(I,d)}}(\frac{\tilde{x}}{A}) \\
\phi_N \ast \frac{1}{A} \phi^e_{_{(I,d+1)}}(\frac{\tilde{x}}{A}) &\succ \phi_N \ast \frac{1}{A} \hat{\phi}_{_{(I,d)}}(\frac{\tilde{x}}{A}) \; ,
\end{align*}
where, the last two expressions are obtained from the results of Lemma~\ref{Lemma:Maj3} and Lemma~\ref{Lemma:SymmNIConv}, respectively. Hence, $\phi_{_{(I,d+1)}} \succ \hat{\phi}_{_{(I,d+1)}}$. \hfill \qed
\end{proof}

\begin{remark}\textbf{\upshape{(Worst-Case Evolution of the System)}}
The PDFs~given by $\hat{\phi}_{_{(I,d)}}$ correspond to the evolution of the control system when the busy channel probability $p=1$, in the Markov model. For such systems, no event is successfully transmitted, and hence, the density of the tail ($|\tilde{x}| > \Delta$) is never reduced, due to a perpetually busy channel ($p=1$). Thus, the gaussian property of the estimation error is retained and its PDF is given by $\hat{\phi}_{_{(I,d)}}$.
\end{remark}

\subsection{Auxiliary PDFs for Higher-Order Systems} \label{SS:PDFmajVector}
In this section, we find an upper bound for the variance of the estimation error, for higher-order systems. The PDF of the state for such control systems need not be symmetric, and hence, the results developed in the previous section cannot be directly applied to such systems. We denote the multivariate PDFs in this section with $\Phi$ in place of $\phi$.

We now find an upper bound for the variance of the estimation error associated with the Markov chain states $(I,d)$ for all $d \ge 0$, by finding suitable PDFs $\hat{\Phi}_{_{(I,d)}}$. We first define the matrix $\bar{A} = \rho(A) I_n$, where $\rho(A)$ is the spectral radius of $A$ and $I_n \in \mathbb{R}^{n \times n}$ is an identity matrix. Let $\mathrm{var}(\Phi)$ denote the variance of the PDF $\Phi$. We now have the following bound on the variance, cf.~Lemma~\ref{Lemma:HatPhi}. 
\begin{lemma} \label{Lemma:HatPhiVec}
Let the auxiliary PDF, $\hat{\Phi}_{_{(I,d)}}$, be defined by the recursive relation
\begin{equation*}
\hat{\Phi}_{_{(I,d)}}(\tilde{x}) = \frac{1}{|\det(\bar{A})|}\hat{\Phi}_{_{(I,d-1)}}(\bar{A}^{-1} \tilde{x}) \ast \Phi_N \; ,
\end{equation*}
with $\hat{\Phi}_{_{(I,0)}} = \phi_N(\Sigma_w)$. Then, $\mathrm{var}(\phi_{_{(I,d)}}) \le \mathrm{var}(\hat{\Phi}_{_{(I,d)}})$ for all $d \ge 0$.
\end{lemma}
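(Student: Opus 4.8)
The plan is to follow the architecture of the proof of Lemma~\ref{Lemma:HatPhi}, but, since in $\mathbb{R}^n$ the densities $\phi_{(I,d)}$ need not be symmetric non-increasing, to replace the majorization argument by a direct bookkeeping of second moments. First I would show, by induction on $d$, that every density entering the recursion --- $\phi_{(I,d)}$, $\phi_{(N,d)}$, $\phi_{(E,d)}$ and $\phi^e_{(I,d)}$ --- is an even function of $\tilde x$. The base case is $\phi_{(I,0)}=\phi_N(\Sigma_w)$; restriction to the ball $\{|\tilde x|\le\Delta_d\}$ or to its complement in (\ref{Eq:pdfNd})--(\ref{Eq:pdfEd}) preserves evenness, the convex combination (\ref{Eq:phied}) preserves it, and the affine-plus-noise map (\ref{Eq:UpdatePDF}) preserves it because $\tilde x\mapsto A^{-1}\tilde x$ is linear and $\phi_N(\Sigma_w)$ is even. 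Hence all these densities are zero mean, so $\mathrm{var}(\phi)=\int|\tilde x|^2\phi(\tilde x)\,d\tilde x$, the second moment about the origin, and that is the quantity I would propagate.

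Second, I would prove that the event/non-event reweighting cannot increase the variance, $\mathrm{var}(\phi^e_{(I,d+1)})\le\mathrm{var}(\phi_{(I,d)})$. From (\ref{Eq:phied}), with $a=\int_{|\tilde x|\le\Delta_{d+1}}|\tilde x|^2\phi_{(I,d)}$, $b=\int_{|\tilde x|>\Delta_{d+1}}|\tilde x|^2\phi_{(I,d)}$, $c=q_\alpha+p\,p_\alpha\in[0,1]$ and $Z=q_{\gamma,d+1}+p_{\gamma,d+1}c\le 1$, one has $\mathrm{var}(\phi^e_{(I,d+1)})=(a+cb)/Z$, and the inequality $(a+cb)/Z\le a+b$ reduces (using $q_{\gamma,d+1}+p_{\gamma,d+1}=1$) to $(1-c)\,(a\,p_{\gamma,d+1}-b\,q_{\gamma,d+1})\le 0$. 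Since $1-c\ge 0$ it suffices that $a/q_{\gamma,d+1}\le b/p_{\gamma,d+1}$, i.e.\ that the conditional second moment of $\phi_{(I,d)}$ over the ball is $\le\Delta_{d+1}^2\le$ the conditional second moment over its complement, which is immediate. (Both $p_{\gamma,d+1}$ and $q_{\gamma,d+1}$ are positive because $\phi_{(I,d)}$, a convolution with a nondegenerate Gaussian, has full support.)

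Third, I would run the affine-plus-noise step. If $Y\sim\phi^e_{(I,d+1)}$ and $W\sim\phi_N(\Sigma_w)$ are independent, then $\phi_{(I,d+1)}$ is the law of $AY+W$, so by independence and zero means $\mathrm{var}(\phi_{(I,d+1)})=\tr\big(A\,\mathrm{Cov}(Y)\,A^{T}\big)+\tr\Sigma_w$; the auxiliary recursion with $\bar A=\rho(A)I_n$ likewise gives $\mathrm{var}(\hat\Phi_{(I,d+1)})=\rho(A)^2\,\mathrm{var}(\hat\Phi_{(I,d)})+\tr\Sigma_w$, with $\mathrm{var}(\hat\Phi_{(I,0)})=\tr\Sigma_w=\mathrm{var}(\phi_{(I,0)})$. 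Bounding $\tr\big(A\,\mathrm{Cov}(Y)\,A^{T}\big)\le\rho(A)^2\,\tr\mathrm{Cov}(Y)=\rho(A)^2\,\mathrm{var}(\phi^e_{(I,d+1)})$ and chaining with the second step gives the one-step comparison $\mathrm{var}(\phi_{(I,d+1)})\le\rho(A)^2\,\mathrm{var}(\phi_{(I,d)})+\tr\Sigma_w$; an induction on $d$ then yields $\mathrm{var}(\phi_{(I,d)})\le\mathrm{var}(\hat\Phi_{(I,d)})$ for all $d\ge 0$.

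The crux --- the one step I expect to need real care --- is the matrix inequality $\tr\big(A\,\Sigma\,A^{T}\big)\le\rho(A)^2\,\tr\Sigma$ for a covariance $\Sigma\succeq0$. The honest bound here is with $\sigma_{\max}(A)^2$, and $\sigma_{\max}(A)\ge\rho(A)$ with equality iff $A$ is normal; this mismatch is precisely why this section is said to deliver "more conservative results" than Section~\ref{SS:PDFmaj}. To close it one either restricts to normal $A$, or passes to a basis in which $A$ is block-diagonalized into rotation-scaling blocks of norm $|\lambda_i|$ (exact when $A$ is diagonalizable), or tolerates a Gelfand-type margin $\|A^d\|\le C(\epsilon)\,(\rho(A)+\epsilon)^d$ that is harmless once the bound is fed into the geometric-series test of Theorem~\ref{Thm:SuffGeoSeries}. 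The evenness induction and the reweighting bound, by contrast, are routine once one commits to tracking the origin-centred second moment rather than the full covariance matrix.
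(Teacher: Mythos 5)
Your proposal follows essentially the same route as the paper's proof: an induction on $d$ in which (i) the event/non-event reweighting in (\ref{Eq:phied}) is shown not to increase the variance, and (ii) the affine-plus-noise update multiplies the variance by at most $\rho(A)^2$ and adds $\tr\{\Sigma_w\}$, matching the auxiliary recursion with $\bar{A}=\rho(A)I_n$. Your reweighting argument --- reducing $(a+cb)/Z\le a+b$ to a comparison of the conditional second moments of $\phi_{_{(I,d)}}$ inside and outside the ball of radius $\Delta_{d+1}$ --- is a cleaner, equivalent version of the paper's Lemma~\ref{Lemma:ReducedVariance}, which reaches the same conclusion by bounding the two correction terms through evaluation of the truncated integrals at the threshold. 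Your preliminary evenness/zero-mean induction is a harmless addition that the paper leaves implicit.

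The step you single out as the crux is indeed the only delicate point, and you should know that the paper does not actually close it either: its proof asserts that $\Sigma^e_{_{(I,d+1)}}$ and $A^TA$ commute on the grounds that their product is symmetric (which is not justified for general $A$ and $\Sigma^e$; only $A\Sigma^e A^T$ is symmetric, and it is not the same matrix as $\Sigma^e A^TA$), and then invokes a spectral inequality involving $\rho(A^TA)$. Since $\rho(A^TA)=\sigma_{\max}(A)^2$, which strictly exceeds $\rho(A)^2$ precisely when $A$ is non-normal, this is the same $\rho$-versus-$\sigma_{\max}$ mismatch you identify. The honest one-step bound is $\tr\{A\Sigma A^T\}\le\sigma_{\max}(A)^2\tr\{\Sigma\}$, and the lemma as stated holds without further argument only for normal $A$; otherwise one needs one of the repairs you list (a similarity transform that block-diagonalizes $A$, or a Gelfand-type margin $(\rho(A)+\epsilon)$ carried into the geometric-series test of Theorem~\ref{Thm:SuffGeoSeries}). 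So your proposal is not weaker than the published proof: it is the same argument with the reweighting step made transparent and the genuine gap stated rather than elided.
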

\begin{proof}
We show this using induction. Trivially, at $d=0$, $\Phi_{_{(I,0)}} = \hat{\Phi}_{_{(I,0)}} = \Phi_N(\Sigma_w)$. Thus, the variances are equal for $d=0$. Let us assume that, for some $d > 0$, $\mathrm{var}(\phi_{_{(I,d)}}) \le \mathrm{var}(\hat{\Phi}_{_{(I,d)}})$. We denote the variance of $e_d$, the innovations process that does not get transmitted after a delay $d$, as $\mathrm{var}(\phi^{e}_{_{(I,d+1)}})$, following the notation in (\ref{Eq:phied}). Then, we show that $\mathrm{var}(\phi^{e}_{_{(I,d+1)}}) \le \mathrm{var}(\Phi_{_{(I,d)}})$ in Lemma~\ref{Lemma:ReducedVariance}. 
Combining this with our induction assumption, we obtain $\mathrm{var}(\hat{\Phi}_{_{(I,d)}}) \ge \mathrm{var}(\phi_{_{(I,d)}}) \ge \mathrm{var}(\phi^{e}_{_{(I,d+1)}})$.

At the next sampling instant, the state is updated according to the state-space model, with a linear transformation and an addition of process noise. The linear transformation of a random vector results in the PDFs denoted $\Phi^{e,+}_{_{(I,d+1)}}$ for the original system, and $\hat{\Phi}^{+}_{_{(I,d)}}$ for the auxiliary system. The transformed PDFs are given by
\begin{align*}
\Phi^{e,+}_{_{(I,d+1)}} &= \frac{1}{|\det(A)|} \Phi^e_{_{(I,d+1)}}(A^{-1} \tilde{x}) \\
\hat{\Phi}^{+}_{_{(I,d)}} &= \frac{1}{|\det(\bar{A})|} \hat{\Phi}_{_{(I,d)}}(\bar{A}^{-1} \tilde{x}) \; .
\end{align*}
The variances can be written as
\begin{align*}
\mathrm{var}(\Phi^{e,+}_{_{(I,d+1)}}) &= \tr\{ A \Sigma^e_{_{(I,d+1)}} A^T\} = \tr\{ \Sigma^e_{_{(I,d+1)}} A^T A \} \\
\mathrm{var}(\hat{\Phi}^{+}_{_{(I,d)}}) &= \tr\{ \bar{A} \hat{\Sigma}_{_{(I,d)}} \bar{A}^T\} = \tr\{ \hat{\Sigma}_{_{(I,d)}} \bar{A}^T \bar{A} \} \; ,
\end{align*}
where $\Sigma^e_{_{(I,d+1)}}$ and $\hat{\Sigma}_{_{(I,d)}}$ are the covariance matrices associated with PDFs $\Phi^e_{_{(I,d+1)}}$ and $\hat{\Phi}_{_{(I,d)}}$, respectively. Now, note that $\Sigma^e_{_{(I,d+1)}}$ and $A^T A$ are symmetric matrices, and that their product $\Sigma^{e,+}_{_{(I,d+1)}}$ is also a symmetric matrix. Thus, the matrices commute, and we can apply the spectral value inequality to obtain $\rho(\Sigma^{e,+}_{_{(I,d+1)}}) \le \rho(\Sigma^e_{_{(I,d+1)}}) \cdot \rho(A^T A)$. Furthermore, $\tr\{ \hat{\Sigma}_{_{(I,d)}} \bar{A}^T \bar{A} \} = \rho^2(A) \mathrm{var}(\hat{\Phi}_{_{(I,d)}})$. Combining these facts, we obtain $\mathrm{var}(\phi^{e,+}_{_{(I,d+1)}}) \le \mathrm{var}(\hat{\Phi}^{+}_{_{(I,d)}})$.

We have not yet accounted for the addition of process noise in the state update. This operation results in an addition of a constant term $\tr\{R_w\}$, corresponding to the variance of the process noise $w_k$, to both the original and auxiliary system. Thus, the variance ordering is preserved, and we have the desired result $\mathrm{var}(\phi_{_{(I,d+1)}}) \le \mathrm{var}(\hat{\Phi}_{_{(I,d+1)}})$. \hfill \qed
\end{proof}

\begin{remark}[Lossy Network as Upper Bound]
The PDFs of the auxiliary systems are used along with the probabilities in the Markov chain in Fig.~\ref{Fig:MCAnalysis} to upper bound the variance of the estimation error of the control system. The resulting approximation describes the evolution of a system with a lossy sensor link, albeit with a loss probability that varies with delay, as shown in Fig.~\ref{Fig:MClossyNW}. The loss probability is given by $p_{_{l,d}} = 1-p_{_{\gamma,d}} p_{_{\alpha}} q$. The estimation error covariance of this system for zero delay is clearly $\hat{P}_0 = \Sigma_w$, and for all other delays $d > 0$, is given by
\begin{equation} \label{Eq:EstErrCov_LossyNW}
\hat{P}_{d} = \rho(A)^2 \hat{P}_{d-1} + \Sigma_w \; .
\end{equation}
\end{remark}

\begin{figure*}[tb]
\begin{center}
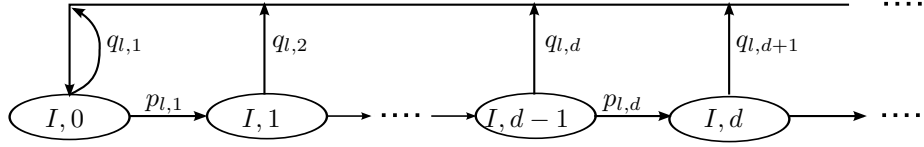
\caption{The majorized PDFs, from Lemma~\ref{Lemma:HatPhi}, along with the probabilities from the original Markov chain are combined to describe a lossy sensor link. The estimation error variance of this system is an upper bound for the control system. }
\label{Fig:MClossyNW}
\end{center}
\end{figure*}

\subsection{Proof of Theorem~\ref{Thm:SuffGeoSeries}} \label{S:Proof}

Let us now prove the stability conditions in Theorem~\ref{Thm:SuffGeoSeries} using the auxiliary systems we have identified.
\begin{proof}
The estimation error covariance can be bounded from above, using the approximations from Lemma~\ref{Lemma:HatPhi}, as
\begin{equation*}
\tr\{\E[P_{^{k|k}}]\} = \sum_{d=1}^{\infty} \pi_{_{(I,d)}} \tr\{P_d\}
\le \sum_{d=1}^{\infty} \pi_{_{(I,d)}} \tr\{\hat{P}_d\} \; . 
\end{equation*}
For this expression to be bounded~\cite{Rudin1976}, we require
\begin{equation*}
\limsup_{d \to \infty} \frac{\pi_{_{(I,d+1)}}\tr\{\hat{P}_{d+1}\}}{\pi_{_{(I,d)}}\tr\{\hat{P}_{d}\}} < 1 \; .
\end{equation*}
Since $\tr\{\hat{P}_d\} = \Sigma_w (1+\rho(A)^2+\dots+\rho(A)^{2(d-1)})$, the left hand side of the above inequality can be written as
\begin{align*}
&\limsup_{d \to \infty}  \frac{\pi_{_{(I,d+1)}}}{\pi_{_{(I,d)}}} \bigg[ 1 + \rho(A)^2 \frac{\rho(A)^{2(d-1)}}{1+\rho(A)^2+\dots+\rho(A)^{2(d-1)}} \bigg] \\
\le &\limsup_{d \to \infty} \frac{\pi_{_{(I,d+1)}}}{\pi_{_{(I,d)}}} [ 1 + \rho(A)^2 ] \; .
\end{align*}
By requiring the last expression to be strictly less than $1$, we satisfy the condition  in (\ref{Eq:SuffGeoSeries}) required to obtain Lyapunov mean square stability. \hfill \qed
\end{proof}

\section{Event-Triggering Policy Synthesis} \label{S:SchedDesign}
We now look at the problem of designing stabilizing event-triggering policies. In particular, how should the event probabilities $\{p_{_{\gamma,d}}\}$ be chosen as a function of $d$ to achieve Lyapunov mean square stability? We can immediately think of three possible ways to let the event probabilities vary with the delay: holding it a constant, additively increasing or decreasing it, or multiplicatively increasing or decreasing it. We discuss the constant-probability policy in detail and identify stability conditions for such policies. We then discuss the feasibility of the other policies briefly.

\subsection{Constant-Probability Policy}
The constant-probability policy provides a constant event probability for all delays, i.e. $p_{_{\gamma,d}} = p_{_{\gamma}}$, for all $d>0$. Using the lossy network model from Section~\ref{S:Results}, we identify stability conditions for this particular policy.

\begin{theorem} \label{Thm:ConstLaw}
For the control system given by (\ref{Eq:StateSpace})--(\ref{Eq:Controller}), a sufficient condition for Lyapunov mean square stability for the constant-probability event-triggering policy is given by
\begin{equation} \label{Eq:StabilityConditions_ConstLaw}
p_{_{\gamma}} \bigg( \sum_{r=1}^{r_{\max}} \big ( q_{_{r}} \cdot \prod_{r=1}^{r_{\max}-1} (1-p_{_{\alpha}} q_{_{r}})\big) \bigg) > \frac{1}{p_{_{\alpha}}} (1-\frac{1}{\rho(A)^2}) \; .
\end{equation}
\end{theorem}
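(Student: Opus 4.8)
The plan is to specialize the lossy-network upper bound of Section~\ref{S:Results} to the constant-probability policy, for which the idle-state probabilities of the Markov chain in Fig.~\ref{Fig:MCAnalysis} form an \emph{exact} geometric sequence, and then to evaluate the resulting series for $\tr\{\E[P_{^{k|k}}]\}$ in closed form rather than merely applying the ratio comparison as in Theorem~\ref{Thm:SuffGeoSeries}. First I would substitute $p_{_{\gamma,d}} = p_{_{\gamma}}$ into the recursion \eqref{Eq:pId} and unroll it to get $\pi_{_{(I,d)}} = (1 - p_{_{\gamma}} p_{_{\alpha}} q)^{d} \pi_{_{(I,0)}}$ for all $d \ge 0$; imposing $\sum_{d \ge 0} \pi_{_{(I,d)}} = 1$ (a convergent geometric series, since network steady state gives $q = 1-p > 0$ and hence $0 < 1 - p_{_{\gamma}} p_{_{\alpha}} q < 1$) yields $\pi_{_{(I,0)}} = p_{_{\gamma}} p_{_{\alpha}} q$.

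Next I would invoke the lossy-network bound of Lemma~\ref{Lemma:HatPhiVec} and the ``Lossy Network as Upper Bound'' remark, namely $\tr\{\E[P_{^{k|k}}]\} \le \sum_{d=0}^{\infty} \pi_{_{(I,d)}} \tr\{\hat{P}_d\}$, where by \eqref{Eq:EstErrCov_LossyNW} the quantity $\tr\{\hat{P}_d\}$ is (for $\rho(A) \neq 1$) an affine function of $\rho(A)^{2d}$, namely $a\,\rho(A)^{2d} + b$ with $a = \tr\{\Sigma_w\}\rho(A)^2/(\rho(A)^2-1) > 0$. Substituting the geometric $\pi_{_{(I,d)}}$ splits the series into $\sum_{d} \big((1-p_{_{\gamma}} p_{_{\alpha}} q)\rho(A)^2\big)^d$ and $\sum_{d} (1-p_{_{\gamma}} p_{_{\alpha}} q)^d$; the second always converges, while the first (whose coefficient $a$ is strictly positive, so no cancellation can rescue divergence) converges if and only if $(1-p_{_{\gamma}} p_{_{\alpha}} q)\rho(A)^2 < 1$. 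Hence the bound on $\tr\{\E[P_{^{k|k}}]\}$ is finite precisely when $p_{_{\gamma}} p_{_{\alpha}} q > 1 - 1/\rho(A)^2$, equivalently $p_{_{\gamma}} q > \tfrac{1}{p_{_{\alpha}}}\big(1 - 1/\rho(A)^2\big)$; for stable plants ($\rho(A) \le 1$) the right-hand side is nonpositive and the bound is already finite under network steady state alone, so the only binding regime is $\rho(A) > 1$. Lemma~\ref{Lemma:Lyapunov mean square stability_CE} then transfers boundedness of $\tr\{\E[P_{^{k|k}}]\}$ to Lyapunov mean square stability of the state.

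Finally I would rewrite $q = 1-p$ using the definition \eqref{Eq:pBianchiRetx} of the aggregate busy-channel probability together with the telescoping identity $1 - \prod_{r=1}^{n}(1-a_r) = \sum_{r=1}^{n} a_r \prod_{s=1}^{r-1}(1-a_s)$ applied with $a_r = p_{_{\alpha}} q_r$, which gives $q = \sum_{r=1}^{r_{\max}} q_r \prod_{s=1}^{r-1}(1-p_{_{\alpha}} q_s)$; inserting this into $p_{_{\gamma}} q > \tfrac{1}{p_{_{\alpha}}}(1 - 1/\rho(A)^2)$ reproduces \eqref{Eq:StabilityConditions_ConstLaw}, the inner product there being understood as running over the earlier retransmission attempts, i.e.\ $\prod_{s=1}^{r-1}(1-p_{_{\alpha}} q_s)$. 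The argument is largely bookkeeping; the two places that require care are obtaining the exact closed form $\pi_{_{(I,d)}} = (1-p_{_{\gamma}} p_{_{\alpha}} q)^d\, p_{_{\gamma}} p_{_{\alpha}} q$ (and noting that the claimed inequality forces $q>0$, so network steady state holds), and the telescoping identity relating the retransmission \emph{product} in \eqref{Eq:pBianchiRetx} to the retransmission \emph{sum} in \eqref{Eq:StabilityConditions_ConstLaw}. The analytic gain over simply applying Theorem~\ref{Thm:SuffGeoSeries} to the constant ratio $\pi_{_{(I,d+1)}}/\pi_{_{(I,d)}} = 1 - p_{_{\gamma}} p_{_{\alpha}} q$ — which would only yield the more conservative threshold $p_{_{\gamma}} p_{_{\alpha}} q > \rho(A)^2/(1+\rho(A)^2)$ — comes from using the sharp asymptotic growth rate $\tr\{\hat{P}_{d+1}\}/\tr\{\hat{P}_{d}\} \to \rho(A)^2$ in place of the crude bound $1+\rho(A)^2$ used in the proof of Theorem~\ref{Thm:SuffGeoSeries}.
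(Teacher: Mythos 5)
Your proof is correct and follows essentially the same route as the paper's: specialize the lossy-network upper bound of Section~\ref{S:Results} to the constant loss probability $p_{_{l}} = 1-p_{_{\gamma}} p_{_{\alpha}} q$, obtain the convergence condition $p_{_{l}}\,\rho(A)^2 < 1$, and substitute for $q$ from (\ref{Eq:pBianchiRetx}). The paper compresses this into three sentences; you simply make explicit the details it leaves implicit — the geometric stationary distribution $\pi_{_{(I,d)}}$, the closed form of $\tr\{\hat{P}_d\}$ and the resulting series test, and the telescoping identity converting the retransmission product into the sum appearing in (\ref{Eq:StabilityConditions_ConstLaw}) — including the correct observation that this is sharper than applying Theorem~\ref{Thm:SuffGeoSeries} directly.
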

\begin{proof}
Using a constant scheduling law, note that the lossy network model has a constant loss probability, $p_{_{l}} = 1-p_{_{\gamma}} p_{_{\alpha}} q$. Thus, the estimation error variance in this model, given in (\ref{Eq:EstErrCov_LossyNW}), converges if
\begin{equation*}
p_{_{l}} \rho(A)^2 < 1 \; .
\end{equation*}
Substituting for $p_{_{l}}$ above from (\ref{Eq:pBianchiRetx}), and rearranging, we obtain the condition in (\ref{Eq:StabilityConditions_ConstLaw}). \hfill \qed
\end{proof}

The constant-probability policy results in simple, closed-form expressions for the probability of successful transmission $p_{_{I,0}}$ and the loss probability $p_{_{l}}$. To see this, note that the sum of the probabilities of the idle states is given by the sum of a geometric series, $\sum_{d=0}^\infty \pi_{_{(I,d)}} = \pi_{_{(I,0)}} \sum_{d=0}^\infty p_{_{l}}^d$. Thus, we have
\begin{equation} \label{Eq:pI0_ConstLaw}
\pi_{_{(I,0)}} = p_{_{\gamma}} p_{_{\alpha}} q \; ,
\end{equation}
using the expression for the loss probability, where $q$ is the complimentary busy channel probability $1-p$. The conditional probability of a busy channel in each retransmission attempt can be computed using (\ref{Eq:pCondr}) and (\ref{Eq:pRd}) as
\begin{align*}
\sum_{d=1}^\infty p_{_{T,d,r}} &= p_{_{\gamma}} p_{_{\alpha}} \prod_{s=1}^r (1 - p_{_{\alpha}} q_s) \sum_{d=1}^\infty \pi_{_{(I,d-1)}}
\end{align*}
which leads to
\begin{align}
p_r &= 1 - \bigg( 1 - p_{_{\gamma}} p_{_{\alpha}} \prod_{s=1}^r (1 - p_{_{\alpha}} q_s) \bigg)^{M-1} \; , \; r \in \{1,\dots,r_{\max}\} \label{Eq:pr_ConstLaw}
\end{align}

Equations (\ref{Eq:pI0_ConstLaw})--(\ref{Eq:pr_ConstLaw}) along with (\ref{Eq:StabilityConditions_ConstLaw}), gives us the event probability $p_{_{\gamma,d}}$ for a given persistence probability $p_{_{\alpha}}$, that guarantees Lyapunov mean square stability. The flowchart in Fig.~\ref{Fig:FlowChart} gives us a set of $p_{_\alpha}$ and $p_{_{\gamma}}$ that generate a stable system. We present an example of this design procedure in Section~\ref{S:Results}.


\subsection{Additive-Probability Policy}
The additive-probability policy is designed to provide an additive increase/decrease in the event probability with delay, i.e., $p_{_{\gamma,d}} = p_{_{\gamma,d-1}} + \nu$, for $\nu \gtrless 0$. Note that $\lim_{d \to \infty} p_{_{\gamma,d}} \rightarrow \infty$ for $\nu > 0$ and $\lim_{d \to \infty} p_{_{\gamma,d}} \rightarrow -\infty$ for $\nu < 0$. Thus, we let the additive terms decrease in magnitude, such that $\sum_{d=1}^{\infty} \nu_d$ is bounded and $\lim_{d \to \infty} p_{_{\gamma,d}} < 1$. 
Many such examples can be found. A simple example is
\begin{equation} \label{Eq:IncSchedLaw}
p_{_{\gamma,d}} = p_{_{\gamma,1}} + \eta + \eta^2 + \dots + \eta^{d-1} \; ,
\end{equation}
which gives rise to an increasing law when $\eta > 0$, and a decreasing law when $\eta < 0$. Thus, $p_{_{\gamma,\infty}} = p_{_{\gamma,1}} + \frac{\eta}{1-\eta}$, and $p_{_{\gamma,1}}$ and $\eta$ must be chosen such that $p_{_{\gamma,1}} < 1$ and $p_{_{\gamma,\infty}} < 1$. Then, we can apply Theorem~\ref{Thm:SuffGeoSeries} to identify designs that are guaranteed to result in Lyapunov mean square stability.

\subsection{Exponential-Probability Policy}
The exponential-probability policy is designed to provide an exponential increase/decrease in the event probability with delay, i.e., $p_{_{\gamma,d}} = \mu p_{_{\gamma,d-1}}$, for $\mu < 1$. Note that if $\mu > 1$, $p_{_{\gamma,d}}$ increases exponentially with delay and the sequence of event probabilities $\{p_{_{\gamma}}\}_{^{1}}^{_{\infty}}$ diverges. For $\mu < 1$, the decreasing probability law can be checked for Lyapunov mean square stability using Theorem~\ref{Thm:SuffGeoSeries}.

\section{Example} \label{S:Sim}

We now illustrate some of the results presented in this work. We begin with an illustration of the upper bound derived in Section~\ref{SS:PDFmaj}. Our next example illustrates how the sufficient conditions for Lyapunov mean square stability, presented in Theorem~\ref{Thm:SuffGeoSeries}, can be used to infer stability properties of the control system. Our third example illustrates the design of a constant-law scheduler that guarantees Lyapunov mean square stability. The final example illustrates the selection of event thresholds corresponding to a given design.

\begin{figure*}[ptb]
\begin{center}
\includegraphics*[scale=0.4,viewport=50 0 1000 300]{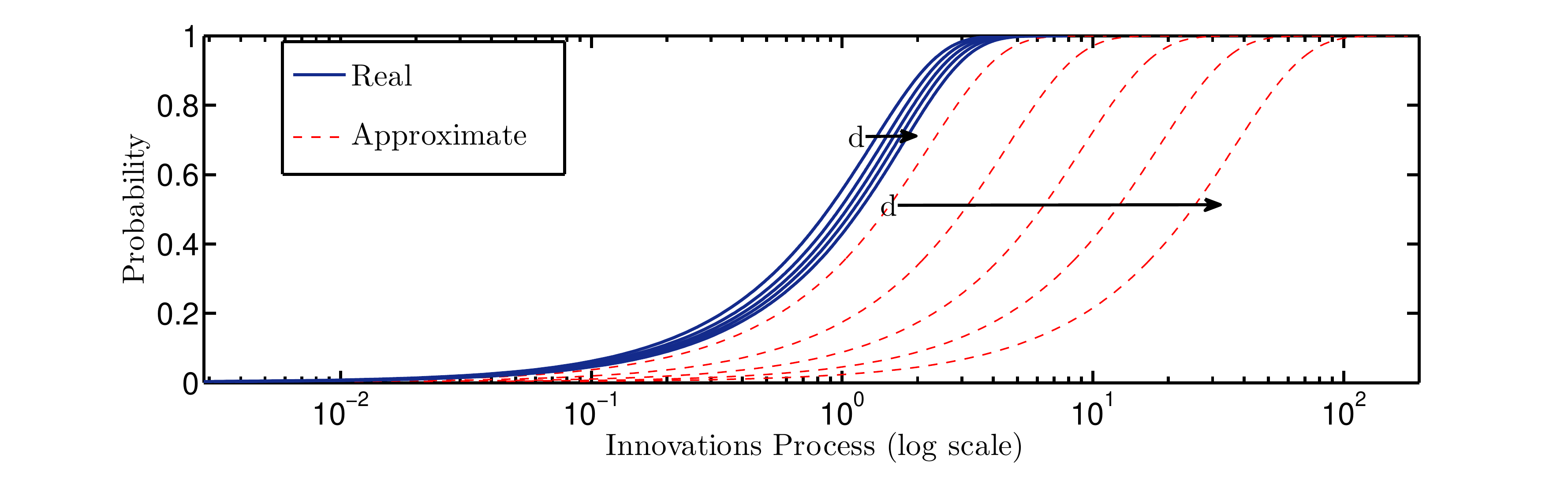}
\caption{The approximate PDF in Lemma~\ref{Lemma:HatPhi} is majorized by the actual PDF, as seen in this comparison of the CDFs. The approximate distribution has a larger variance, and is an upper bound for the actual variance. }
\label{Fig:CompCDF}
\end{center}
\end{figure*}

\begin{example}[Illustration of Majorization] \label{Ex:MajPDF}
In Lemma~\ref{Lemma:HatPhi}, we find an approximating PDF $\hat{\phi}_{_{(I,d)}}$, which is majorized by the real PDF $\phi_{_{(I,d)}}$, for all delays $d>0$. We illustrate this for a control system with parameters $A=2$, $B=1$, $\Sigma_w = 1$ and a constant event threshold $\Delta_d=1$, for all $d>0$. The CRM persistence probability is set to $p_{_{\alpha}}=1$, and the conditional probability of a busy channel is $p=0.6$. For this setup, we compare the cumulative distribution function (CDF) corresponding to $\phi_{_{(I,d)}}$, with the CDF corresponding to $\hat{\phi}_{_{(I,d)}}$, for $d=\{1,\dots,5\}$, in Fig.~\ref{Fig:CompCDF}. The arrows indicate the increasing delays. Clearly, $\phi_{_{(I,d)}} \succ \hat{\phi}_{_{(I,d)}}$, for each of the five delays, according to Definition~\ref{Def:Maj}. This figure also illustrates why the estimation error covariance of the approximated PDF is greater than that of the real PDF.
\end{example}

\begin{figure*}[ptb]
\centering
\includegraphics*[scale=0.4,viewport=0 20 1000 300]{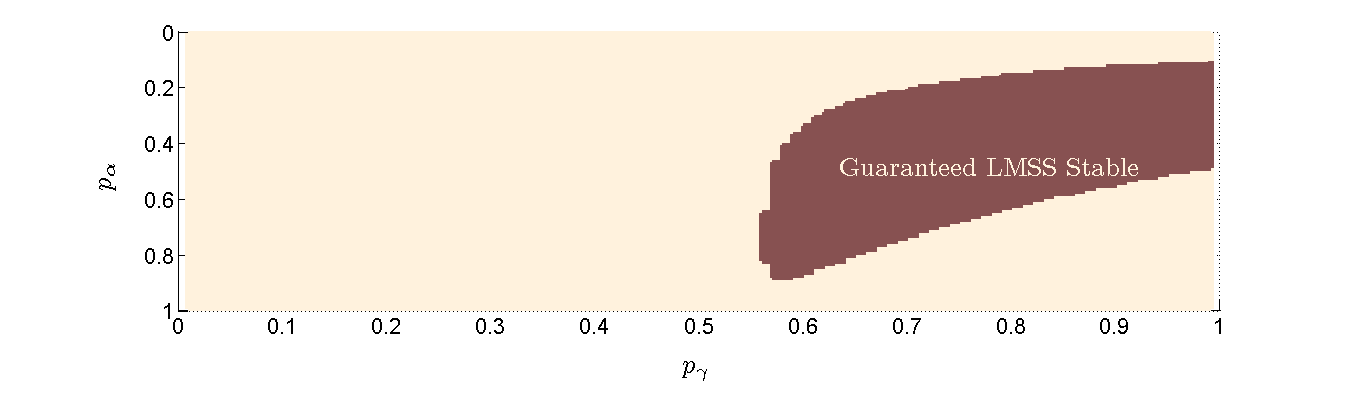}
\caption{The shaded region denotes the set of event and persistence probabilities, $p_{_{\gamma}}$ and $p_{_{\alpha}}$, respectively, that guarantee Lyapunov mean square stability. We use the sufficient conditions in Theorem~\ref{Thm:ConstLaw}, for a constant-probability scheduler, to determine Lyapunov mean square stability. }
\label{Fig:pSuccessMap}
\end{figure*}
\begin{figure*}[ptb]
\includegraphics*[scale=0.235,viewport=0 40 1000 450]{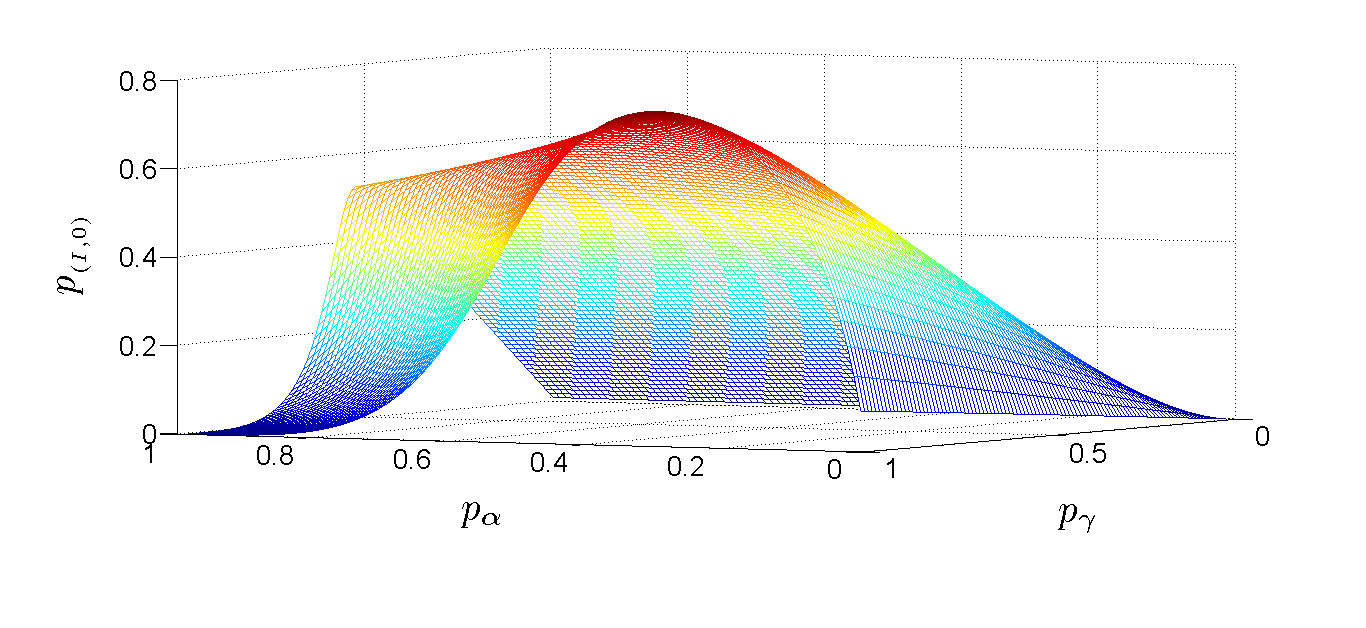} \hfill
\includegraphics*[scale=0.235,viewport=10 40 1000 450]{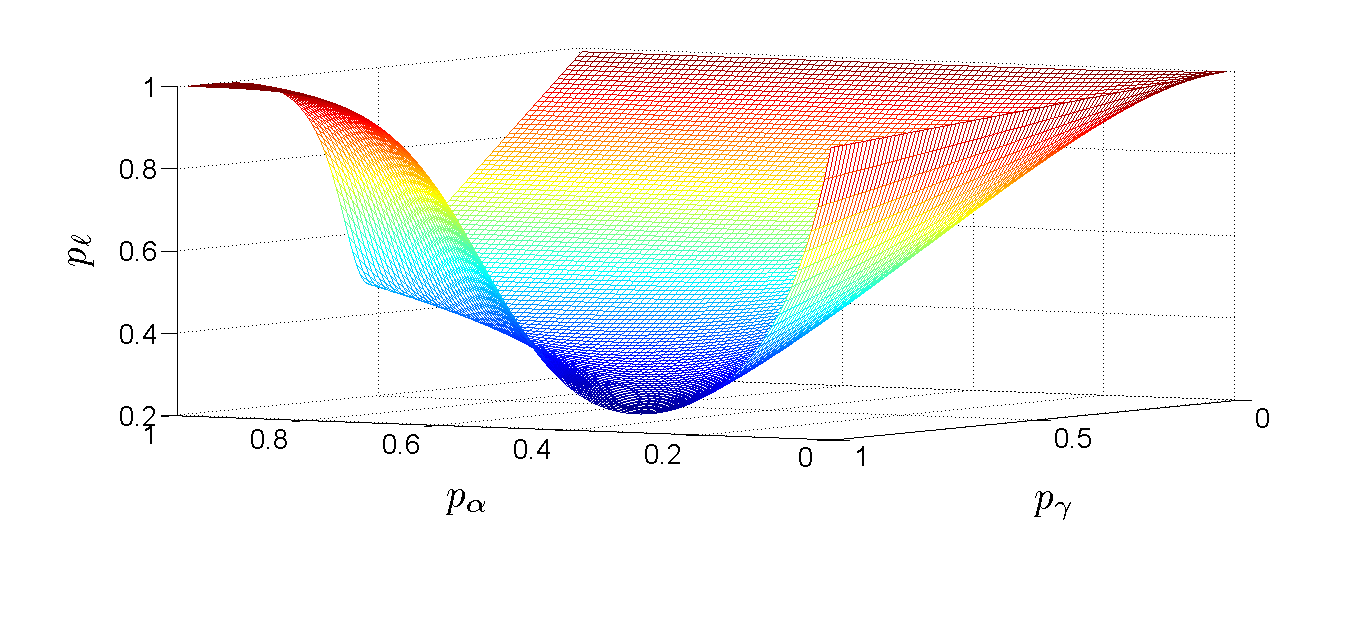}
\caption{A surface plot of the network reliability $\pi_{_{(I,0)}}$, and the probability of loss $p_{_{l}}$, respectively, versus the event probability $p_{_{\gamma}}$ and the persistence probability $p_{_{\alpha}}$, for a constant-probability scheduler. }
\label{Fig:pSuccessLossPlot}
\end{figure*}

The auxiliary system was chosen to correspond to the worst-case evolution of the real system, with saturated network traffic. Thus, the upper bound is tighter for a large busy channel probability $p$ and small state transition matrix $A$.

Next, we return to Example~\ref{Ex:LMSSvisualizer}, where we illustrated that different network sizes result in different stability properties. We use our sufficient conditions for Lyapunov mean square stability to confirm the observed stability properties for two network sizes.
\begin{example}\textbf{\upshape{(Checking for Lyapunov Mean Square Stability)}} \label{Ex:LMSSver}
We consider two network scenarios: case $1$ corresponds to a network with $M = 2$ nodes and case $2$ to a network with $M=10$ nodes. The control systems in both network scenarios are identical to the systems described in Example~\ref{Ex:LMSSvisualizer}, and so is the CRM. We use Theorem~\ref{Thm:SuffGeoSeries} to show that in case~$1$, Lyapunov mean square stability is achievable, and that in case~$2$, Lyapunov mean square stability cannot be guaranteed. This can be seen by using (\ref{Eq:SuffGeoSeries}), where we see that the idle state probabilities must achieve a ratio of less than $0.5$ for large $d$. Case $1$ achieves a ratio of less than $0.1$ for $d>10$, whereas case $2$ has a ratio of $0.98$ even after $d=50$. The Lyapunov mean square stability properties can be inferred from a trace of the state $x$ as illustrated in Fig.~\ref{Fig:MultiX}.
\end{example}

\begin{figure}[ptb]
\begin{center}
\subfigure[Probability of Transmission Success, $A = 1.25$]{\label{Fig:Comparison_Succ1}
\includegraphics*[scale=0.225,viewport=25 0 1000 430]{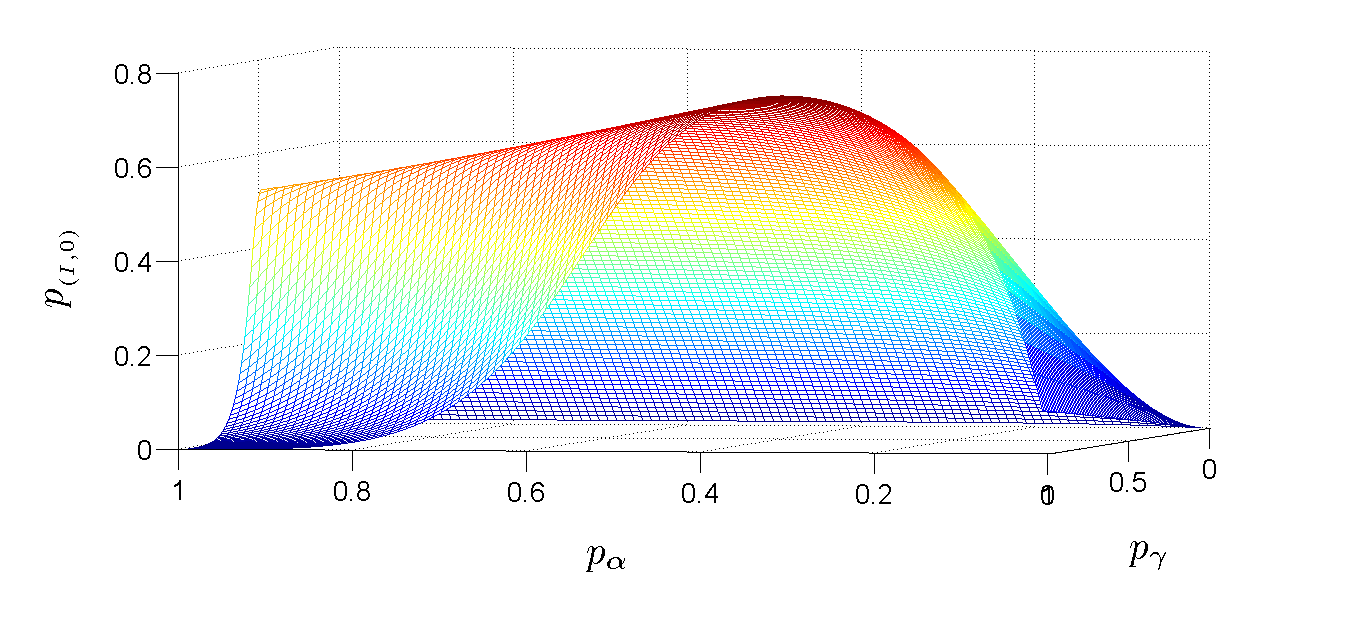}}
\subfigure[Probability of Transmission Success, $A = 2$]{\label{Fig:Comparison_Succ2}
\includegraphics*[scale=0.225,viewport=25 0 1000 430]{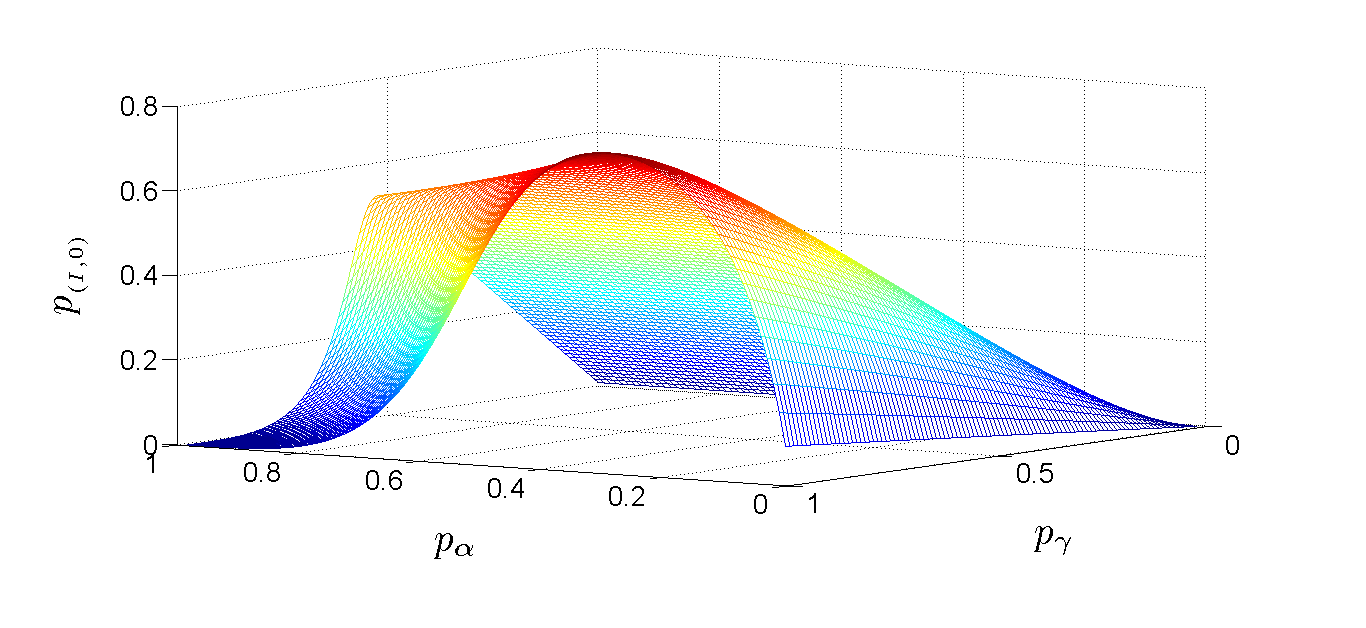}}
\subfigure[Probability of Loss, $A = 1.25$]{\label{Fig:Comparison_Loss1}
\includegraphics*[scale=0.225,viewport=25 0 1000 300]{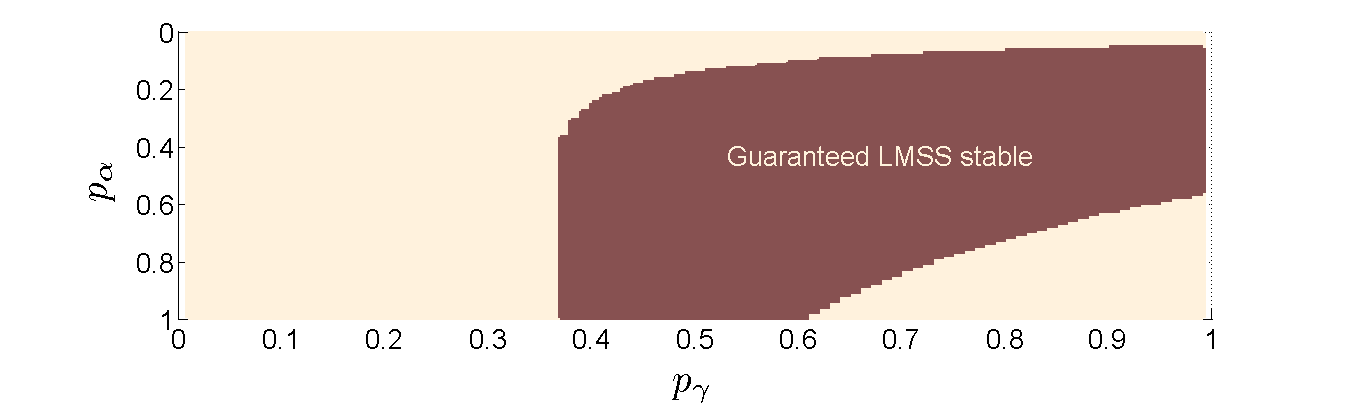}}
\subfigure[Probability of Loss, $A = 2$]{\label{Fig:Comparison_Loss2}
\includegraphics*[scale=0.24,viewport=25 25 1000 300]{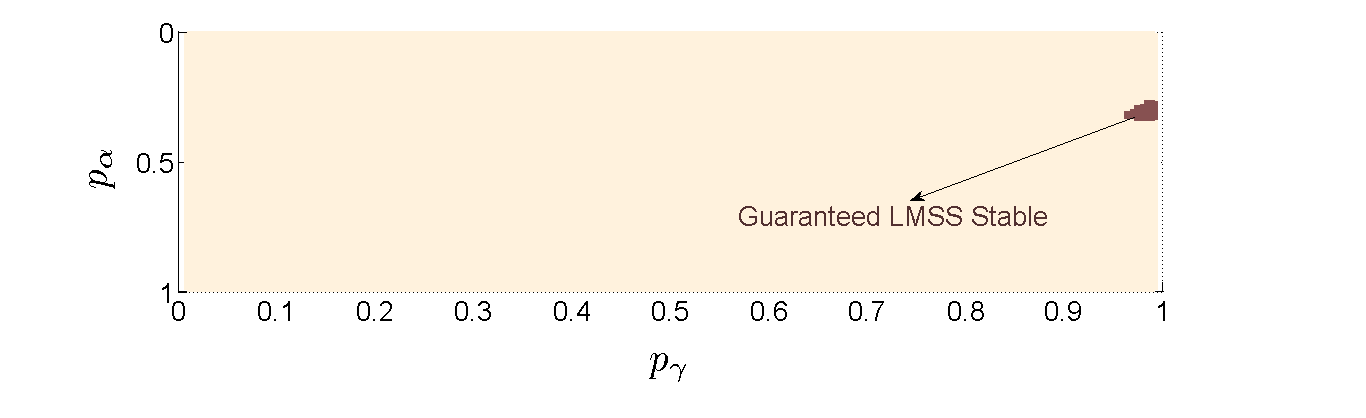}}
\caption{Surface Plots of the probability of transmission success (above), and plots indicating the region of stability (below), for $A = 1.25$ (left) and $A = 2$ (right). The region guaranteed to be Lyapunov mean square stability by our sufficient conditions shrinks considerably for highly unstable control systems.}
\end{center}
\end{figure}

In the next example, we illustrate the design procedure outlined in Fig.~\ref{Fig:FlowChart}.
\begin{example}[Constant-Probability Policy] \label{Ex:ConstLawDesign}
We consider a network of $M=5$ control systems with identical parameters $A = 1.5$, $B = 1$ and $\Sigma_w = 1$, and $r_{\max} = 10$ retransmissions in the CRM. For simplicity, we assume that the persistence probability $p_{_{\alpha}}$ is constant for all retransmission attempts. Using an algorithm similar to the one outlined in the flowchart in Fig.~\ref{Fig:FlowChart}, we obtain the set of event probabilities $p_{_{\gamma}}$, and the set of persistence probabilities $p_{_{\alpha}}$, that result in Lyapunov mean square stability. The results are depicted in Fig.~\ref{Fig:pSuccessMap}. The shaded region in the figure corresponds to this set. In Fig.~\ref{Fig:pSuccessLossPlot}, we present surface plots of the network reliability $\pi_{_{(I,0)}}$ and the probability of loss $p_{_{l}}$, respectively, versus $p_{_{\gamma}}$ and $p_{_{\alpha}}$. It is interesting to note the importance of jointly selecting $p_{_{\gamma}}$ and $p_{_{\alpha}}$.

Now, we compare the Lyapunov mean square stability regions obtained for the same network, but with control system parameters $A=1.25$ and $A=2$, i.e., less unstable and more unstable systems, respectively. The surface plots of the probability of successful transmission are shown in Fig.~\ref{Fig:Comparison_Succ1} and Fig.~\ref{Fig:Comparison_Succ2}. The shaded regions in Fig~\ref{Fig:Comparison_Loss1} and Fig~\ref{Fig:Comparison_Loss2} denote the sets of event and persistence probabilities that guarantee Lyapunov mean square stability. Notice how the Lyapunov mean square stability region given by our sufficient condition in Theorem~\ref{Thm:SuffGeoSeries} shrinks as $A$ increases.
\end{example}

\begin{figure*}[ptb] \label{Fig:SimResults}
\centering
\includegraphics*[width=20cm,viewport = 30 200 1200 450]{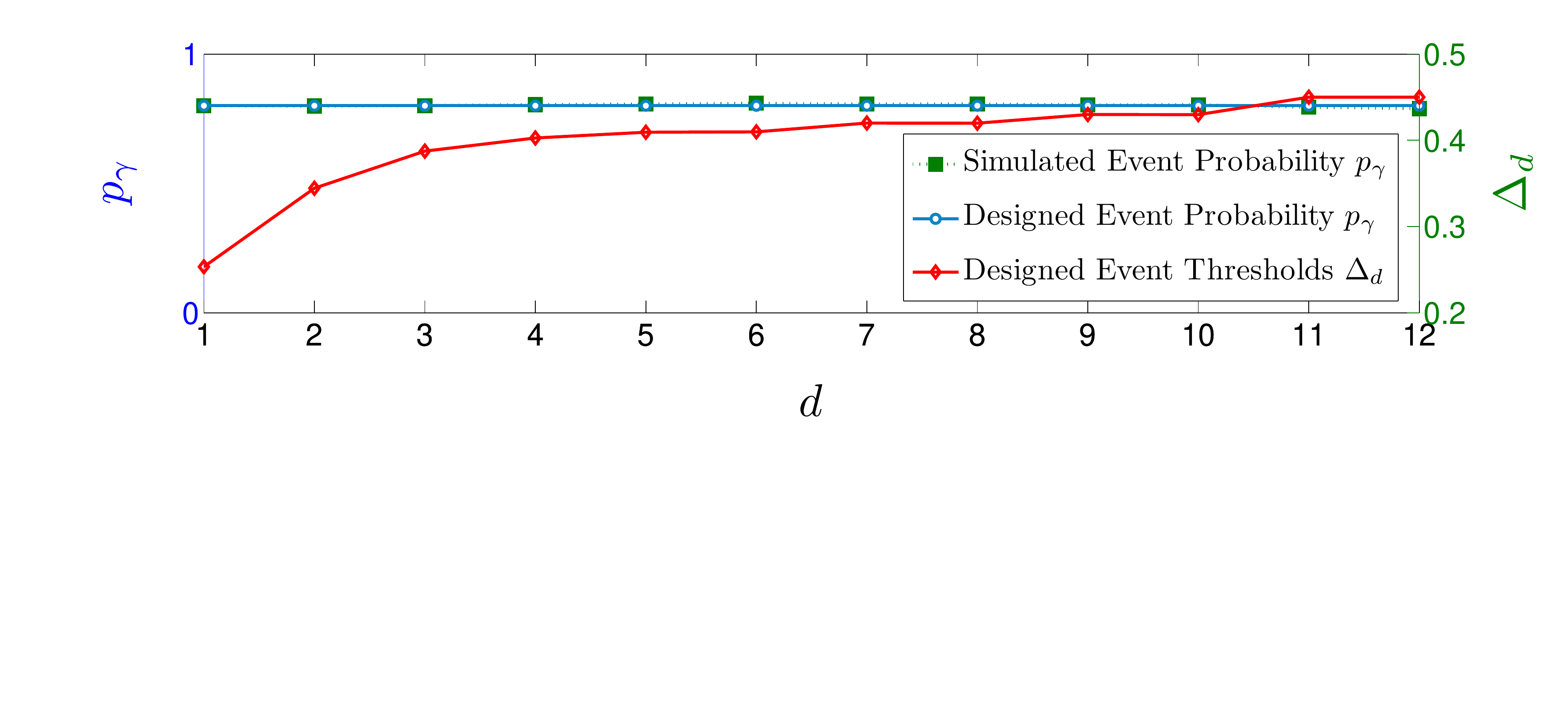}
\caption[Constant Law Policy]{A comparison of analytical and simulated values of the event probability $p_{_{\gamma}}$, plotted against the axis on the left. The numerically computed values of the event thresholds $\Delta_d$ are plotted against the axis on the right. }
\end{figure*}

We now illustrate how to select event thresholds for an event probability $p_{_{\gamma}}$ and persistence probability $p_{_{\alpha}}$ chosen from the outcome of Example~\ref{Ex:ConstLawDesign}.
\begin{example}[Selecting Event Thresholds] \label{Ex:SelectLevels}
For the control system with parameters $A = 1.5$, , $B = 1$ and $\Sigma_w = 1$, choose $p_{_{\gamma}} = 0.8$ and $p_{_{\alpha}} = 0.4$. This choice of probabilities yields a network reliability of $\pi_{_{(I,0)}} = 0.7056$ and a loss probability of $p_{_{l}} = 0.2944$, from Fig.~\ref{Fig:pSuccessMap} and Fig.~\ref{Fig:pSuccessLossPlot}, respectively. The delay distribution for the constant-probability scheduler is easily seen to be given by $\Pr(d_k = d) = \pi_{_{(I,0)}} \cdot p_{_{l}}^d$, for any delay $d \ge 0$. The exponential delay distribution considerably simplifies our task. We now need to identify only a set of $D$ event thresholds, where we choose $D$ to be sufficiently larger than the smallest probability we wish to consider. In this example, we choose $D = 12$. 

We now numerically compute event thresholds that give us the above event probabilities. To do this, we simulate the evolution of distributions described in (\ref{Eq:pdfNd})--(\ref{Eq:UpdatePDF}), and assign the event thresholds as $\Delta_d := t: \int_{|x| \le t} \phi_d dx = p_{_{\gamma}}$, for all $d \ge 0$. We present the event thresholds, thus identified, in Fig.~\ref{Fig:SimResults}. To validate our design procedure, we run Monte Carlo simulations using the thresholds identified above, and confirm that the event probabilities we obtain are as desired, as shown in Fig.~\ref{Fig:SimResults}. For delays larger than $D = 12$, the probabilities we obtain are not accurate, as there are too few instances of these events to result in a precise value. 
\end{example}

\section{Conclusions} \label{S:Conclusions}

We have presented a stability analysis and synthesis for event-triggering policies in a network of control systems that use a CRM to access the network. The event-triggering policy and the CRM sometimes result in congestion, and consequently packet losses and delays. We presented a two-step approach to design event-based systems in this network. We began by designing event probabilities that guarantee Lyapunov mean square stability for each control system in the network, and then illustrated how to select event thresholds to achieve the designed event probabilities. Our event probability designs were based on the stability analysis presented in this paper.

The results in this paper indicate a potential for performance optimization through joint design of the event probabilities and the CRM. This, and a study into the feasibility of implementing such a joint design, are left for future work.

\bibliographystyle{elsarticle-num}
\bibliography{FullList_NoUrl}


\appendix
\section{Properties of the Majorization Operator} \label{App:MajLemmas}

We first need the following result on neat and even PDFs.
\begin{lemma} \label{Lemma:SymmNIConv}
If the PDFs $\phi_a$ and $\phi_b$ on $\mathbb{R}$ are neat and even, then $\phi_a \ast \phi_b$ is also neat and even.
\end{lemma}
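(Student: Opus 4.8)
The plan is to translate the hypothesis into the language of symmetric non-increasing functions (Definition~\ref{Def:SymmNIfunc}), prove the closure property there, and translate back. As remarked just after Definition~\ref{Def:NeatPDF}, a PDF on $\mathbb{R}$ is neat and even if and only if it is symmetric non-increasing; concretely, $\phi_a(x)=\psi_a(|x|)$ and $\phi_b(x)=\psi_b(|x|)$ for some non-increasing $\psi_a,\psi_b$ on $\mathbb{R}^+$ (in particular the breakpoint $r$ of Definition~\ref{Def:NeatPDF} may be taken to be $0$ once evenness is imposed). So it suffices to show that $\phi_a\ast\phi_b$ is symmetric non-increasing, and then invoke the same equivalence once more to conclude it is neat and even.

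Evenness is immediate: substituting $y\mapsto -y$ and using evenness of $\phi_a,\phi_b$,
\[
(\phi_a\ast\phi_b)(-x)=\int_{\mathbb{R}}\phi_a(-x-y)\phi_b(y)\,dy=\int_{\mathbb{R}}\phi_a(x-y)\phi_b(y)\,dy=(\phi_a\ast\phi_b)(x).
\]
For monotonicity I would use the layer-cake representation. Since $\phi_a$ is symmetric non-increasing, each superlevel set $A_t:=\{x:\phi_a(x)>t\}$ is an interval symmetric about the origin (empty once $t\ge\sup\phi_a$), and $\phi_a=\int_0^\infty \mathbf{1}_{A_t}\,dt$ pointwise; similarly $\phi_b=\int_0^\infty \mathbf{1}_{B_s}\,ds$. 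By Tonelli's theorem (all integrands non-negative),
\[
(\phi_a\ast\phi_b)(x)=\int_0^\infty\!\!\int_0^\infty\big(\mathbf{1}_{A_t}\ast\mathbf{1}_{B_s}\big)(x)\,ds\,dt.
\]
Each kernel $(\mathbf{1}_{(-a,a)}\ast\mathbf{1}_{(-b,b)})(x)$ is the Lebesgue measure of $(-a,a)\cap(x-b,x+b)$, which one computes explicitly to be the trapezoidal profile equal to $\min(2a,2b)$ on $|x|\le|a-b|$, decreasing linearly to $0$ on $|a-b|\le|x|\le a+b$, and $0$ beyond; this is visibly a symmetric non-increasing function of $x$. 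Since the class of symmetric non-increasing functions on $\mathbb{R}$ is closed under integration against a positive measure (evenness and the inequality $h(x_1)\ge h(x_2)$ for $0\le x_1\le x_2$ pass through the integral), the double integral above is symmetric non-increasing, hence so is $\phi_a\ast\phi_b$.

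The routine parts are the evenness computation and the trapezoid formula; the step needing a little care is justifying the interchange and the passage of the symmetric-non-increasing property through the double integral. Here I would note that each integrand $(\mathbf{1}_{A_t}\ast\mathbf{1}_{B_s})(x)$ is defined for \emph{every} $x$, so the pointwise monotonicity can be integrated directly, that empty superlevel sets are harmless since $\mathbf{1}_\emptyset\ast\mathbf{1}_{B_s}\equiv 0$ is trivially symmetric non-increasing, and that the double integral is finite for a.e.\ $x$ because it equals $(\phi_a\ast\phi_b)(x)$, the convolution of two probability densities. A self-contained alternative would be a direct "sliding interval" estimate showing $(\phi_a\ast\phi_b)(x+\varepsilon)\le(\phi_a\ast\phi_b)(x)$ for $x,\varepsilon\ge0$ by comparing the overlap of $\phi_a(x-\cdot)$ with $\phi_b$, but that essentially re-derives the same interval bookkeeping and is no shorter, so I would keep the level-set decomposition.
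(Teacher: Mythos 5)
Your proof is correct and takes essentially the same route as the paper's: a layer-cake decomposition of the densities into indicator functions of symmetric intervals, followed by the observation that the symmetric non-increasing class is closed under positive combinations. The only difference is that you decompose both factors and compute the trapezoidal kernel $\mathbf{1}_{(-a,a)}\ast\mathbf{1}_{(-b,b)}$ explicitly, which supplies the justification for the intermediate claim (that $\phi_a\ast\chi_n$ is symmetric non-increasing) that the paper merely asserts.
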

\begin{proof}
The PDF $\phi_b$ is a convex combination of indicator functions, $\chi_n(x) = 1$ for $x \in [-n,n]$ and zero otherwise. For any $n$, note that $\phi_a \ast \chi_n$ is symmetric and non-increasing. Convex combinations of neat distributions are neat, and hence, the result follows. \hfill \qed
\end{proof}

We now present a series of results that use the majorization operator. The proofs presented here are adapted from~\cite{Hajek2008}. These results are used in the proofs presented in Section~\ref{SS:PDFmaj}.
\begin{lemma} \label{Lemma:Maj1}
If the PDFs $\phi_a$ and $\phi_b$ on $\mathbb{R}^n$ are such that $\phi_a \succ \phi_b$, then $\int \phi_a^\sigma (x) h(x) dx \ge \int \phi_b^\sigma (x) h(x) dx$ for any symmetric non-increasing function $h$.
\end{lemma}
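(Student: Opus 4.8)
The plan is to prove the statement by the standard ``layer-cake'' (bathtub) decomposition of $h$, combined directly with the defining inequality of the majorization order in Definition~\ref{Def:Maj}. The key structural observation is that, since $h$ is symmetric non-increasing, $h(x) = \varphi(|x|)$ for a non-increasing $\varphi$ on $\mathbb{R}^+$, so every strict superlevel set $\{x : h(x) > s\}$ is, up to a Lebesgue-null set, a ball $B_{\rho(s)} = \{x : |x| \le \rho(s)\}$ centred at the origin, with $\rho(s)$ non-increasing in $s$. Assuming $h$ is real-valued and bounded below — which is the only case ever used in the paper — set $c := \lim_{|x|\to\infty} h(x) = \inf_x h(x) \in \mathbb{R}$, so that
\[
h(x) = c + \int_0^\infty \mathbf{1}\!\big[x \in B_{\rho(c+t)}\big]\, dt \qquad \text{for every } x .
\]

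First I would substitute this representation into the two integrals and apply Tonelli's theorem — legitimate because $\phi_a^\sigma, \phi_b^\sigma \ge 0$ and the indicator is non-negative — to obtain
\[
\int_{\mathbb{R}^n} \phi_a^\sigma(x)\, h(x)\, dx = c \int_{\mathbb{R}^n} \phi_a^\sigma(x)\, dx + \int_0^\infty \bigg( \int_{|x| \le \rho(c+t)} \phi_a^\sigma(x)\, dx \bigg) dt ,
\]
together with the identical identity for $\phi_b^\sigma$. Since $\phi_a^\sigma$ and $\phi_b^\sigma$ are probability densities, $\int \phi_a^\sigma = 1 = \int \phi_b^\sigma$, so the two leading constant terms are both equal to $c$ and cancel when we compare the two expressions; the comparison thus reduces to the second term.

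Next I would invoke the hypothesis $\phi_a \succ \phi_b$, which by Definition~\ref{Def:Maj} is precisely the statement that $\int_{|x| \le \rho} \phi_a^\sigma \ge \int_{|x| \le \rho} \phi_b^\sigma$ for every $\rho \ge 0$. Applying this with $\rho = \rho(c+t)$ for each $t > 0$ gives a pointwise (in $t$) inequality between the inner integrals; integrating it over $t \in (0,\infty)$ dominates the second term for $a$ over that for $b$, and hence $\int \phi_a^\sigma h \ge \int \phi_b^\sigma h$, as claimed.

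The main obstacle here is bookkeeping rather than genuine difficulty: one must justify that the superlevel sets of a radial non-increasing function really are centred balls (so that Definition~\ref{Def:Maj} applies verbatim), handle the shift by the constant $c$ cleanly so that it can be subtracted without sign issues, and dispose of the degenerate cases where $h$ is unbounded below (both sides $-\infty$, nothing to prove) — after which the Tonelli interchange and the termwise comparison are entirely routine.
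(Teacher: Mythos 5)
Your proof is correct and takes essentially the same route as the paper: the paper's one-line argument that ``$h$ is a convex combination of indicator functions of balls centered at the origin'' is precisely your layer-cake representation, and applying Definition~\ref{Def:Maj} to each indicator is your pointwise-in-$t$ comparison followed by Tonelli. The additional care you take with the constant shift $c$ and the null-set boundary of the superlevel sets only fills in details the paper leaves implicit.
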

\begin{proof}
The function $h$ is a convex combination of indicator functions of balls centered at the origin. For any such indicator function, the above result is obvious from the definition of majorization. Hence, the result follows. \hfill \qed
\end{proof}

\begin{lemma} \label{Lemma:Maj2}
If the PDFs $\phi_a$, $\phi_b$ and $\psi$ on $\mathbb{R}^n$ are such that $\phi_a \succ \phi_b$, and if $\phi_a$ and $\psi$ are symmetric non-increasing, then $\phi_a \ast \psi \succ \phi_b \ast \psi$.
\end{lemma}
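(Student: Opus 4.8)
The plan is to verify the defining inequality of Definition~\ref{Def:Maj} directly. Fix $\rho \ge 0$ and write $B_\rho := \{x \in \mathbb{R}^n : |x| \le \rho\}$; I must show that $\int_{B_\rho}(\phi_a \ast \psi)^\sigma \ge \int_{B_\rho}(\phi_b \ast \psi)^\sigma$. The first step is to note that the convolution of two symmetric non-increasing functions is again symmetric non-increasing (the $\mathbb{R}^n$ analogue of Lemma~\ref{Lemma:SymmNIConv}; one sees it by expanding $\psi$ through the layer-cake formula as a mixture of indicators of centred balls and observing that $\chi_{B_r}\ast\phi_a$ is, for each $r$, the radial and decreasing map $x \mapsto \int_{x+B_r}\phi_a$). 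Hence $\phi_a \ast \psi$ coincides with its own symmetric rearrangement, and by Fubini and the evenness of $\psi$, $\int_{B_\rho}(\phi_a \ast \psi)^\sigma = \int_{B_\rho}(\phi_a \ast \psi)(x)\,dx = \int \phi_a(y)\,(\chi_{B_\rho} \ast \psi)(y)\,dy$. I record for later that $h := \chi_{B_\rho} \ast \psi$ is itself symmetric non-increasing.

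To bound $\int_{B_\rho}(\phi_b \ast \psi)^\sigma$ from above, I would use the standard Hardy--Littlewood identity that, for a nonnegative integrable $g$, $\int_{B_\rho} g^\sigma = \sup\{\int_E g : E \text{ Borel}, \ \mathcal{L}(E) = \mathcal{L}(B_\rho)\}$. Applied to $g = \phi_b \ast \psi$, together with Fubini and evenness of $\psi$, this gives $\int_E (\phi_b \ast \psi) = \iint \phi_b(y)\,\chi_E(x)\,\psi(y-x)\,dx\,dy$. Now invoking the Riesz rearrangement inequality, and using $\chi_E^\sigma = \chi_{B_\rho}$ (equal Lebesgue measure) and $\psi^\sigma = \psi$, the right-hand side is bounded by $\iint \phi_b^\sigma(y)\,\chi_{B_\rho}(x)\,\psi(y-x)\,dx\,dy = \int \phi_b^\sigma(y)\,h(y)\,dy$. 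Taking the supremum over admissible $E$ yields $\int_{B_\rho}(\phi_b \ast \psi)^\sigma \le \int \phi_b^\sigma(y)\,h(y)\,dy$.

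It then remains to compare $\int \phi_b^\sigma h$ with $\int \phi_a^\sigma h$: since $h$ is symmetric non-increasing and $\phi_a \succ \phi_b$, Lemma~\ref{Lemma:Maj1} gives $\int \phi_a^\sigma h \ge \int \phi_b^\sigma h$; and $\phi_a^\sigma = \phi_a$ because $\phi_a$ is symmetric non-increasing, so $\int \phi_a^\sigma h = \int \phi_a(y)\,(\chi_{B_\rho} \ast \psi)(y)\,dy = \int_{B_\rho}(\phi_a \ast \psi)^\sigma$ by the first paragraph. Chaining the displays gives $\int_{B_\rho}(\phi_b \ast \psi)^\sigma \le \int_{B_\rho}(\phi_a \ast \psi)^\sigma$ for every $\rho \ge 0$, i.e. $\phi_a \ast \psi \succ \phi_b \ast \psi$.

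The substance of the argument lies entirely in the rearrangement bookkeeping, and that is where I expect the only real care to be needed: (i) establishing that convolution preserves ``symmetric non-increasing'' on $\mathbb{R}^n$, so that both $\phi_a \ast \psi$ and $\chi_{B_\rho} \ast \psi$ equal their symmetric rearrangements; and (ii) applying the Riesz rearrangement inequality in the right configuration, i.e.\ recognising that replacing the cut-off set $E$ by the centred ball $E^\sigma = B_\rho$ can only increase the pairing of $\phi_b$ with the radially decreasing kernel $\psi$. A slightly more pedestrian but equivalent route avoids citing Riesz in full generality: using the same layer-cake decomposition $\psi = \int_0^\infty \chi_{B_{r(t)}}\,dt$, one first proves the statement for $\psi = \chi_{B_r}$ (where the relevant overlap-volume function is manifestly radial and decreasing) and then integrates over $t$ at the end, which is legitimate precisely because each $\phi_a \ast \chi_{B_{r(t)}}$ is already symmetric non-increasing, so the symmetric rearrangement of the mixture equals the mixture of the symmetric rearrangements.
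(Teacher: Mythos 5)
Your proof is correct and rests on exactly the tool the paper names for this lemma: the paper does not reproduce the argument but defers to Lemma~6.7 of the cited reference, stating only that the proof ``uses Riesz's rearrangement inequality,'' which is precisely the pivot of your argument (together with the standard bathtub/Hardy--Littlewood identity and the fact that convolution preserves the symmetric non-increasing property). You have in effect supplied the details the paper omits, following the same route rather than a different one.
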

This proof uses Riesz's rearrangement inequality and is given as proof of Lemma~$6.7$ in~\cite{Hajek2008}.

\begin{lemma} \label{Lemma:Maj3}
If the PDFs $\phi_a$ and $\phi_b$ on $\mathbb{R}$ are such that $\phi_a \succ \phi_b$, and $\phi^+(x) \triangleq \frac{1}{|a|} \phi(\frac{x}{a})$, then $\phi^+_a \succ \phi^+_b$.
\end{lemma}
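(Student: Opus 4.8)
The plan is to show that symmetric rearrangement commutes with the dilation $x \mapsto x/a$, and then reduce the defining inequality of majorization for the rescaled densities to the one already assumed.

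First I would record the scaling identity $(\phi^+)^\sigma(x) = \frac{1}{|a|}\phi^\sigma(x/a)$. To see this, note that the map $D_a: x \mapsto a x$ on $\mathbb{R}$ sends any Borel set $G$ to $aG$ with $\mathcal{L}(aG) = |a|\,\mathcal{L}(G)$, and sends intervals centered at the origin to intervals centered at the origin; hence $D_a(G^\sigma) = (aG)^\sigma$. The super-level sets of $\phi^+$ satisfy $\{x : \phi^+(x) > l\} = \{x : \frac{1}{|a|}\phi(x/a) > l\} = a\{x' : \phi(x') > |a|\,l\}$, so their symmetric rearrangements are the $a$-dilates of the symmetric rearrangements of $\{x' : \phi(x') > |a|\,l\}$. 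Plugging this into the layer-cake formula (\ref{Eq:SymmNDRearrangement}) with $h = \phi^+$ and changing the variable $l \mapsto l/|a|$ in the integral gives exactly $\frac{1}{|a|}\phi^\sigma(x/a)$.

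Next I would use this identity inside the definition of $\succ$. For any $\rho \ge 0$, the substitution $y = x/a$ gives
\[
\int_{|x| \le \rho} (\phi^+_a)^\sigma(x)\,dx = \int_{|x| \le \rho} \tfrac{1}{|a|}\phi_a^\sigma(x/a)\,dx = \int_{|y| \le \rho/|a|} \phi_a^\sigma(y)\,dy ,
\]
and likewise with $\phi_b$ in place of $\phi_a$. Since $\phi_a \succ \phi_b$ asserts $\int_{|y|\le \rho'} \phi_a^\sigma(y)\,dy \ge \int_{|y|\le \rho'}\phi_b^\sigma(y)\,dy$ for every $\rho' \ge 0$, applying it with $\rho' = \rho/|a|$ yields $\int_{|x|\le\rho}(\phi_a^+)^\sigma\,dx \ge \int_{|x|\le\rho}(\phi_b^+)^\sigma\,dx$ for all $\rho \ge 0$, which is precisely $\phi^+_a \succ \phi^+_b$.

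The only delicate point is the scaling identity for the symmetric rearrangement; everything else is a one-line change of variables. I would therefore concentrate the care on verifying that a dilation carries the symmetric rearrangement of a set to the symmetric rearrangement of the dilated set — which holds because a dilation is a measure-scaling homeomorphism fixing the origin and preserving the family of origin-centered intervals — and the layer-cake representation then transfers this to the function $\phi^+$ automatically.
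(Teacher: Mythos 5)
Your proof is correct and takes essentially the same route as the paper, which simply asserts that the claim "can be shown to hold directly" from the definitions of majorization and of $\phi^+$; you have supplied exactly the missing details, namely the scaling identity $(\phi^+)^\sigma(x) = \frac{1}{|a|}\phi^\sigma(x/a)$ and the change of variables $\rho' = \rho/|a|$ in the defining inequality. No gaps.
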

\begin{proof}
Using the definition of majorization, and the definitions of $\phi_a^+$ and $\phi_b^+$, the result can be shown to hold directly. \hfill \qed
\end{proof}

\begin{lemma} \label{Lemma:Maj4}
If the symmetric non-increasing PDFs $\phi_a$ and $\phi_b$ on $\mathbb{R}^n$ are such that $\phi_a \succ \phi_b$, and if $h$ be a symmetric non-decreasing positive function, then $\int \phi_a^\sigma(x) h(x) dx \le \int \phi_b^\sigma(x) h(x) dx$.
\end{lemma}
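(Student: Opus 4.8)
Since $\phi_a$ and $\phi_b$ are already symmetric non-increasing, $\phi_a^\sigma=\phi_a$ and $\phi_b^\sigma=\phi_b$, so the claim reduces to $\int \phi_a(x) h(x)\,dx \le \int \phi_b(x) h(x)\,dx$. The plan is to mirror the proof of Lemma~\ref{Lemma:Maj1}, but to represent $h$ (up to an additive constant) as a non-negative superposition of indicators of \emph{complements} of balls centred at the origin, rather than of balls themselves; the single sign flip this introduces is exactly what turns the ``$\ge$'' of Lemma~\ref{Lemma:Maj1} into the ``$\le$'' asserted here, and it is the same mechanism that makes Lemma~\ref{Lemma:EstErrCovOrdering} go through with $h(x)=|x|^2$.

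First I would normalise. A symmetric non-decreasing $h$ attains its infimum at the origin, so $c:=h(0)\ge 0$ and $g:=h-c\ge 0$ is again symmetric non-decreasing, with $g(0)=0$. Because $\phi_a$ and $\phi_b$ are probability densities, $\int\phi_a h-\int\phi_b h=\int\phi_a g-\int\phi_b g$, so it suffices to prove the inequality for $g$. Next I would apply the layer-cake identity $g(x)=\int_0^\infty \mathbf{1}_{\{g>l\}}(x)\,dl$. Since $g(x)=\psi(|x|)$ for a non-decreasing $\psi$, each super-level set $\{g>l\}$ is the complement of a ball $B_{\rho(l)}$ centred at the origin (a sphere has Lebesgue measure zero, so whether this set is open or closed, and whether $\rho(l)$ is $0$ or $\infty$, is immaterial). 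By Tonelli's theorem — legitimate since every integrand is non-negative —
\[
\int \phi_a g\,dx=\int_0^\infty\Big(1-\int_{B_{\rho(l)}}\phi_a\,dx\Big)dl ,
\]
and likewise for $\phi_b$.

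Finally I would invoke majorization: by Definition~\ref{Def:Maj}, $\int_{|x|\le\rho}\phi_a\,dx\ge\int_{|x|\le\rho}\phi_b\,dx$ for every $\rho\ge 0$, hence $1-\int_{B_{\rho(l)}}\phi_a\le 1-\int_{B_{\rho(l)}}\phi_b$ pointwise in $l$; integrating over $l\ge 0$ gives $\int\phi_a g\le\int\phi_b g$, and therefore $\int\phi_a h\le\int\phi_b h$, as required. I do not expect a serious obstacle; the only point to watch is integrability — if $\int\phi_b g=+\infty$ the statement is vacuous, and otherwise the Tonelli interchange and the term-by-term comparison are valid because all quantities are non-negative. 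The mildly fiddly bookkeeping is in verifying that the super-level sets of $g$ are genuinely ball complements and that the radii $\rho(l)$ are well behaved at the endpoints of the range of $l$, but this is routine.
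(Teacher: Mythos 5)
Your proof is correct, and it takes a genuinely different route from the paper's. The paper disposes of this lemma in one line by asserting that a symmetric non-decreasing $h$ is quasi-concave and hence Schur-concave, and then citing the fact that expectations of Schur-concave functions reverse the majorization order; your argument instead reduces to the case $g = h - h(0)\ge 0$ (using that $\phi_a$ and $\phi_b$ both integrate to one, so the constant cancels), writes $g$ via the layer-cake formula as a superposition of indicators of complements of balls centred at the origin, applies Tonelli, and then uses Definition~\ref{Def:Maj} pointwise in the level parameter $l$ to flip the inequality. This is exactly the ``mirror image'' of Lemma~\ref{Lemma:Maj1} that you describe, and every step checks out: the super-level sets of $g$ are indeed ball complements because $g$ is radial and non-decreasing in $|x|$, the boundary spheres are Lebesgue-null so they do not affect integrals against densities, $\phi^\sigma = \phi$ for symmetric non-increasing $\phi$, and the comparison survives even when both integrals are $+\infty$ since Tonelli gives the inequality in the extended reals. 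What your approach buys is that it is elementary and self-contained, relying only on the definition of majorization rather than on an imported Schur-concavity result; indeed the paper's own phrasing is on shakier ground, since a function such as $h(x)=|x|^2$ has convex sublevel sets and is therefore quasi-convex rather than quasi-concave, so your direct argument is arguably the cleaner justification of the two.
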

\begin{proof}
Note that the function $h$ is symmetric and quasi-concave, thus making it Schur-concave. It is known that $\E_{\phi_a}[h] \le \E_{\phi_b}[h]$, for Schur-concave functions. Thus, the desired result follows. \hfill \qed
\end{proof}

\section{Other Lemmas}

\begin{lemma} \label{Lemma:ReducedVariance}
For a general $n^{\textrm{th}}$-order plant in the network setup given by (\ref{Eq:StateSpace})--(\ref{Eq:Controller}), the posterior variance of the innovations is less than its a priori value, i.e., $\mathrm{var}(\phi^{e}_{_{(I,d+1)}}) \le \mathrm{var}(\phi_{_{(I,d)}})$.
\end{lemma}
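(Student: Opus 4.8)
The plan is to exploit that the step producing $\phi^{e}_{_{(I,d+1)}}$ from $\phi_{_{(I,d)}}$ only moves probability mass \emph{towards} the origin, and to turn this into a second-moment inequality. By (\ref{Eq:phied}), $\phi^{e}_{_{(I,d+1)}}$ is obtained from $\phi_{_{(I,d)}}$ by multiplying the central ball $\{|\tilde{x}| \le \Delta_{d+1}\}$ by the constant $c_1 := \big(q_{_{\gamma,d+1}}+p_{_{\gamma,d+1}}(q_{_{\alpha}} + p p_{_{\alpha}})\big)^{-1}$ and its complement by $c_2 := c_1\,(q_{_{\alpha}} + p p_{_{\alpha}})$. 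Since $q_{_{\alpha}} + p p_{_{\alpha}} \le 1$, the same elementary bounds recorded in the proof of Lemma~\ref{Lemma:HatPhi} give $c_1 \ge 1$ and $0 \le c_2 \le 1$, so this reweighting inflates the centre and deflates the tail of $\phi_{_{(I,d)}}$.

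Next I would verify that $\phi_{_{(I,d)}}$, and hence $\phi^{e}_{_{(I,d+1)}}$, is an even density, i.e.\ $\phi_{_{(I,d)}}(-\tilde{x}) = \phi_{_{(I,d)}}(\tilde{x})$. Then both have zero mean, so $\mathrm{var}(\Phi) = \tr(\mathrm{Cov}(\Phi)) = \int |\tilde{x}|^2 \Phi(\tilde{x})\,d\tilde{x}$ for $\Phi \in \{\phi_{_{(I,d)}}, \phi^{e}_{_{(I,d+1)}}\}$. Evenness is proved by induction on $d$ along (\ref{Eq:pdfNd})--(\ref{Eq:UpdatePDF}): the base density $\phi_{_{(I,0)}} = \phi_N(\Sigma_w)$ is even; the reweighting in (\ref{Eq:phied}) multiplies by a function of $|\tilde{x}|$ only and so preserves evenness; and the update (\ref{Eq:UpdatePDF}) is a linear change of variables by $A^{-1}$ composed with convolution against the even density $\phi_N(\Sigma_w)$, both of which preserve evenness. (For singular $A$ the same holds on the subspace orthogonal to the null space, as in the footnote after (\ref{Eq:UpdatePDF}).)

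With the mean fixed at the origin, set
\begin{equation*}
m := (c_1-1)\int_{|\tilde{x}| \le \Delta_{d+1}} \phi_{_{(I,d)}}(\tilde{x})\,d\tilde{x} = (1-c_2)\int_{|\tilde{x}| > \Delta_{d+1}} \phi_{_{(I,d)}}(\tilde{x})\,d\tilde{x} \ge 0 ,
\end{equation*}
where the two expressions agree because $\phi^{e}_{_{(I,d+1)}}$ and $\phi_{_{(I,d)}}$ both integrate to one. Using $|\tilde{x}|^2 \le \Delta_{d+1}^2$ on the central ball, where the integrand difference $\phi^{e}_{_{(I,d+1)}} - \phi_{_{(I,d)}}$ is nonnegative, and $|\tilde{x}|^2 \ge \Delta_{d+1}^2$ on its complement, where it is nonpositive,
\begin{equation*}
\mathrm{var}(\phi^{e}_{_{(I,d+1)}}) - \mathrm{var}(\phi_{_{(I,d)}}) = \int |\tilde{x}|^2 \big(\phi^{e}_{_{(I,d+1)}} - \phi_{_{(I,d)}}\big)\,d\tilde{x} \le \Delta_{d+1}^2\, m - \Delta_{d+1}^2\, m = 0 ,
\end{equation*}
which is the claim.

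The one point requiring care is the passage to the general vector case: $\phi_{_{(I,d)}}$ need not be symmetric non-increasing (its level sets are ellipsoids, not balls), so the majorization machinery of Section~\ref{SS:PDFmaj} does not apply. The observation that rescues the argument is that evenness alone already fixes the mean at the origin, and that the triggering set $\{|\tilde{x}| \le \Delta_{d+1}\}$ is itself a ball centred at the origin, so the mass moved inward can be charged against $\Delta_{d+1}^2$ uniformly on both sides of the ledger.
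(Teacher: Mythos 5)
Your argument is correct and is essentially the paper's own proof: both bound the inflated central contribution from above by $\Delta_{d+1}^2$ times the probability mass gained on the ball and the deflated tail contribution from below by $\Delta_{d+1}^2$ times the mass lost on its complement, and observe that these two quantities cancel because both densities integrate to one. Your only addition is the explicit induction showing the densities are even (hence zero-mean), a point the paper uses implicitly when it identifies the variance with the second moment about the origin.
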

\begin{proof}
We can find expressions for the a priori variance, denoted $\sigma^2_{_{(I,d)}} \triangleq \mathrm{var}(\phi_{_{(I,d)}})$, and the posterior variance, denoted  $(\sigma^{e}_{_{(I,d+1)}})^2 \triangleq \mathrm{var}(\phi^e_{_{(I,d+1)}})$, as
\begin{align*}
\sigma^2_{_{(I,d)}} &= \sigma^2_{\Delta^-,d} + \sigma^2_{\Delta^+,d} \; , \\
(\sigma^{e}_{_{(I,d+1)}})^2 &= \sigma^2_{\Delta^-,d} \frac{1}{1 - p_{_{\gamma,d}} p_{_{\alpha}} q} + \sigma^2_{\Delta^+,d} \frac{(1 - p_{_{\alpha}} q)}{1 - p_{_{\gamma,d}} p_{_{\alpha}} q} \; ,
\end{align*}
where $\sigma^2_{\Delta^-,d} = \int_{|\tilde{x}| \le \Delta_{d}} |\tilde{x}|^2 \psi(\tilde{x}) d \tilde{x}$ and $\sigma^2_{\Delta^+,d} = \int_{|\tilde{x}| > \Delta_{d}} |\tilde{x}|^2 \psi(\tilde{x}) d \tilde{x}$. Now, the variance of the posterior distribution can be rewritten as
\begin{align*}
(\sigma^{e}_{_{(I,d+1)}})^2 &= \sigma^2_{_{(I,d)}} + \sigma^2_{\Delta^-,d} \left( \frac{1}{1 - p_{_{\gamma,d}} p_{_{\alpha}} q} - 1 \right) \\
 &\quad \quad \quad \quad \quad + \sigma^2_{\Delta^+,d} \left( \frac{(1 - p_{_{\alpha}} q)}{1 - p_{_{\gamma,d}} p_{_{\alpha}} q} - 1\right) \\
&\le \sigma^2_{_{(I,d)}} + \max \bigg( \sigma^2_{\Delta^-,d} \left( \frac{1}{1 - p_{_{\gamma,d}} p_{_{\alpha}} q} - 1 \right)  \\
& \quad \quad \quad \quad \quad + \sigma^2_{\Delta^+,d} \left( \frac{(1 - p_{_{\alpha}} q)}{1 - p_{_{\gamma,d}} p_{_{\alpha}} q} - 1\right) \bigg) \; .
\end{align*}
The maximum value of the first term can be found by evaluating the integral at the upper boundary to obtain $\max \sigma^2_{\Delta^-,d} = \Delta_d^2 q_{_{\gamma,d}}$. However, the second term is negative as $\frac{(1 - p_{_{\alpha}} q)}{1 - p_{_{\gamma,d}} p_{_{\alpha}} q} < 1$. The maximum value of this term is found by evaluating the integral at the lower boundary. Doing so, we obtain
\begin{align*}
(\sigma^{e}_{_{(I,d+1)}})^2 &\le \sigma^2_{_{(I,d)}} + \Delta_d^2 \bigg( q_{_{\gamma,d}} \left( \frac{1}{1 - p_{_{\gamma,d}} p_{_{\alpha}} q} - 1 \right) \\
&\quad \quad \quad \quad \quad - p_{_{\gamma,d}} \left( \frac{(1 - p_{_{\alpha}} q)}{1 - p_{_{\gamma,d}} p_{_{\alpha}} q} - 1\right) \bigg) \\
&= \sigma^2_{_{(I,d)}} + 0 \; ,
\end{align*}
where it is easy to check that the terms in the inner bracket sum to zero. \hfill \qed
\end{proof}

%

\end{document}